\spnewtheorem{remarkbf}[theorem]{Remark}{\bfseries}{\rmfamily}
\spnewtheorem{examplebf}[theorem]{Example}{\bfseries}{\rmfamily}
\spnewtheorem{assumption}[theorem]{Assumption}{\bfseries}{\rmfamily}
\def \tuple#1{\langle #1 \rangle}
\newcommand{\Ra}{\Rightarrow}
\newcommand{\La}{\Leftarrow}
\newcommand{\Lra}{\Leftrightarrow}
\newcommand{\ra}{\rightarrow}
\newcommand{\sra}{\shortrightarrow}
\newcommand{\ud}{\triangleq}
\newcommand{\diff}{\stackrel{\text{\tiny $\triangle$}}{\Leftrightarrow}}
\newcommand{\dequiv}{\stackrel{\text{\tiny $\triangle$}}{\equiv}}
\def\ok#1{\mbox{\raisebox{0ex}[1ex][1ex]{$#1$}}}
\newcommand{\umul}{{\check{\mu}_L}}
\newcommand{\lmul}{{\hat{\mu}_L}}
\newcommand{\rat}[1]{\stackrel{\text{\tiny $#1$}}{\rightarrow}}
\newcommand{\grasse}[1]{\llbracket #1 \rrbracket}
\DeclareMathOperator{\uco}{uco}
\DeclareMathOperator{\lco}{lco}
\DeclareMathOperator{\av}{\mathit{Av}}
\DeclareMathOperator{\Var}{Var}
\DeclareMathOperator{\aff}{aff}
\DeclareMathOperator{\AInv}{{\normalfont \textsc{AInv}}}
\DeclareMathOperator{\Aff}{{\normalfont \textsc{Aff}}}
\DeclareMathOperator{\Const}{{\normalfont \textsc{Const}}}
\DeclareMathOperator{\C}{{\normalfont \textsc{C}}}
\DeclareMathOperator{\safe}{{\normalfont \textsc{safe}}}
\DeclareMathOperator{\inv}{{\normalfont \textsc{inv}}}
\DeclareMathOperator{\LExp}{LExp}
\DeclareMathOperator{\Reach}{Reach}
\DeclareMathOperator{\pre}{\mathsf{pre}}
\DeclareMathOperator{\post}{\mathsf{post}}
\DeclareMathOperator{\pret}{\widetilde{\mathsf{pre}}}
\DeclareMathOperator{\postt}{\widetilde{\mathsf{post}}}
\DeclareMathOperator{\lfp}{\mathsf{lfp}}
\DeclareMathOperator{\gfp}{\mathsf{gfp}}
\DeclareMathOperator{\Fix}{Fix}
\DeclareMathOperator{\preFix}{Fix^{\text{\tiny $\leq$}}}
\DeclareMathOperator{\postFix}{Fix^{\text{\tiny $\geq$}}}
\DeclareMathOperator{\Safe}{Safe}
\DeclareMathOperator{\Init}{Init}
\DeclareMathOperator{\id}{id}
\newcommand{\cF}{\mathcal{F}}
\newcommand{\cS}{\mathcal{S}}
\newcommand{\cG}{\mathcal{G}}
\newcommand{\cP}{\mathcal{P}}
\newcommand{\cA}{\mathcal{A}}
\newcommand{\cT}{\mathcal{T}}
\newcommand{\cC}{\mathcal{C}}
\newcommand{\cX}{\mathcal{X}}
\newcommand{\cR}{\mathcal{R}}
\newcommand\Tau{\textsf{T}}
\newcommand{\bV}{\mathbb{V}}
\newcommand{\bQ}{\mathbb{Q}}
\newcommand{\bN}{\mathbb{N}}
\newcommand{\bZ}{\mathbb{Z}}
\newcommand{\bR}{\mathbb{R}}
\newcommand{\dotted}[1]{\mathrel{\dot{#1}}}
\begin{document}

\title{Decidability and Synthesis of Abstract Inductive Invariants}
\author{Francesco Ranzato}

\institute{Dipartimento di Matematica, University of Padova, Italy}
\pagestyle{plain}

\maketitle 

\begin{abstract}
Decidability and synthesis of inductive invariants ranging in a given 
domain play an important role
in many software and hardware verification systems.  
We consider here inductive invariants belonging to an abstract domain $A$
as defined in abstract interpretation, namely, ensuring 
the existence of
the best approximation in $A$ of any system property. In this setting, we study
the decidability of the existence of abstract inductive invariants 
in $A$ of transition systems and their corresponding algorithmic synthesis.  
Our model relies on some general results which relate the existence of abstract inductive
invariants with least fixed points of best correct approximations in $A$ of
the transfer functions of transition systems  and
their completeness properties.   
This approach allows us to derive 
decidability and synthesis results for abstract inductive invariants which are applied to the well-known 
Kildall's constant propagation and Karr's affine equalities abstract domains.
Moreover, we show that a recent general algorithm  
for synthesizing inductive invariants in domains of logical formulae 
can be systematically
derived from our results and generalized to a 
range of algorithms for computing
abstract inductive invariants. 
\end{abstract}

\section{Introduction}
Proof and inference methods based on inductive invariants are widespread in automatic (or semi-automatic)
program and system verification (see, \textit{e.g.},
\cite{chatterjee10,cousot19,CC82,dillig13,kincaid,sagiv20,padon19,sagiv-cav19,ivrii14,dafny1,ma19,mine16,seidl18,padon16}).
The inductive invariant proof method roots at the works of Floyd~\cite{floyd67},
Park~\cite{park1969,park79},  Naur~\cite{naur}, 
Manna et al.~\cite{manna} back to Turing~\cite{turing49}. 
Given a transition system $\cT=\tuple{\Sigma,\tau,\Sigma_0}$, \textit{e.g.} representing a program, where $\tau$ is a transition relation on 
states ranging in $\Sigma$ and
$\Sigma_0\subseteq \Sigma$  is a set of initial states,  
together with a safety property $P\subseteq \Sigma$ to check,  
let us recall that 
a state property $I\subseteq \Sigma$ is an \emph{inductive invariant} for 
$\tuple{\cT,P}$ when:
$\Sigma_0 \subseteq I$, \textit{i.e.}\ the initial states satisfy $I$; 
$I\subseteq P$, \textit{i.e.}\ $I$ entails the safety property $P$; $\tau(I)\subseteq I$, 
\textit{i.e.}\ $I$ is inductive. The \emph{inductive invariant principle}
states that $P$ 
holds for all the reachable states of  $\cT$ 
iff there exists an inductive invariant $I$ for $\tuple{\cT,P}$. In this
explicit form this principle has been probably 
first formulated by Cousot and Cousot \cite[Section~5]{CC82} in 1982
and called ``induction principle for invariance proofs''.
In most cases, verification and inference methods rely on 
inductive invariants $I$ that range in some restricted domain 
$\cA\subseteq
\wp(\Sigma)$, such as a domain of logical formulae (\textit{e.g.}, some separation
logic or some fragment of
first-order logic \cite{padon16}) or a domain of abstract interpretation \cite{CC77,CC79}
(\textit{e.g.}, numerical abstract domains of affine relations or convex polyhedra).
In this scenario, if an inductive invariant $I$ belongs to 
$\cA$ then $I$ is here called an \emph{abstract inductive invariant}.  

\paragraph{\textbf{Main Contributions.}} Our primary goal was to investigate 
whether and 
how the inductive invariant principle can be adapted
 when  inductive invariants are restricted to range in an abstract domain $\cA$.  We formulate this problem in 
general order-theoretical terms and we make the following working assumption: 
$\cA\subseteq
\wp(\Sigma)$ is an abstract domain as defined in abstract interpretation \cite{CC77,CC79}. This means that each state
property $X\in \wp(\Sigma)$ has a \emph{best} over-approximation (w.r.t.\ $\subseteq$) $\alpha_\cA(X)$ 
in $\cA$ and each state transition relation  $\tau$ has
a \emph{best} correct approximation $\tau^\cA$ on the abstract domain $\cA$. Under these hypotheses, we prove an \emph{abstract inductive invariant principle} stating that there exists an abstract inductive invariant in $\cA$ proving a property $P$ of a transition system $\cT$ 
iff the \emph{best abstraction} $\cT^\cA$ 
in $\cA$ of the system $\cT$ allows us to prove $P$. 
The decidability/undecidability question of the existence of abstract inductive invariants 
in some abstract domain $\cA$ for some class of 
transition systems has been recently investigated in a few
significant cases \cite{worrell19,worrell18,mon19,shoham18}.
We show how the abstract inductive invariant principle 
allows us to derive a general 
decidability result on the existence of inductive invariants 
in some abstract domain $\cA$ and to design a 
general algorithm for synthesizing 
the least (w.r.t.\ the order of $\cA$) 
abstract inductive invariant in $\cA$, when this exists, by a least fixpoint computation
in $\cA$.

We also show a related result 
which  is
of independent interest in abstract interpretation: 
the (concrete) inductive invariant principle for a system $\cT$
is equivalent to the abstract inductive invariant principle for $\cT$ on an abstract domain $\cA$ iff \emph{fixpoint completeness} of $\cT$ on $\cA$ holds, \textit{i.e.}, 
the best abstraction in $\cA$ of the reachable states of $\cT$ coincides
with the reachable states of the best abstraction $\cT^\cA$ of $\cT$ on $\cA$.

The decidability/synthesis
of abstract inductive invariants 
in a  domain $\cA$ for some class $\cC$ of systems  
essentially boils down to prove that
the best correct approximation $\tau^{\cA}$ in $\cA$ 
of the transition relation $\tau$ of systems in the class $\cC$ 
is computable. As case studies, we provide two such 
results for Kildall's 
constant propagation  \cite{kildall73}
and Karr's affine equalities \cite{karr76} domains, 
which are well-known and widely used abstract domains 
in numerical program analysis \cite{mine17}. 
As a second application, we design an inductive invariant synthesis algorithm which, 
by generalizing a recent algorithm by Padon et al.~\cite{padon16}
tailored for logical invariants, 
outputs the most abstract (\textit{i.e.}, weakest) inductive invariant in a
domain $\cA$ which satisfies some suitable hypotheses.  
In particular, we show that this synthesis algorithm is obtained 
by instantiating a concrete co-inductive (\textit{i.e.}, based on
a greatest fixpoint computation) fixpoint checking algorithm by Cousot~\cite{cou00} to a domain $\cA$ of abstract  invariants which is \emph{disjunctive}, \textit{i.e.}, 
the abstract least upper bound of $\cA$ does
not lose precision. This generalization allows us to design further related co-inductive algorithms for synthesizing abstract inductive invariants.

\section{Background}\label{sec:background}

\subsection{Order Theory}%
If $X$ is a subset of some universe set $U$ then $\neg X$ denotes the complement of $X$ with respect to $U$ when $U$ is implicitly
given by the context. 
If $f:X\ra Y$ is a function between sets and $S\in \wp(X)$ then $f(S)
\ud \{f(x) \in Y \mid x\in S\}$ denotes the image of $f$ on $S$. A composition of two functions $f$ and $g$ is denoted both by $fg$ or $f\circ g$.
If $\vec{x}\in X^n$ is a vector in a product domain, $j\in [1,n]$ 
and $y\in X$ then $\vec{x}[x_j/y]$ denotes
the vector obtained from $\vec{x}$ by replacing  
its $j$-th component $x_j$ with $y$.  
To keep the notation simple and compact, we use the same symbol for a function/relation 
and its componentwise (\textit{i.e.}\ pointwise) extension on product domains, \textit{e.g.}, if
$\vec{S},\vec{T}\in \wp(X)^n$ then $\vec{S}\subseteq \vec{T}$ 
denotes that for all $i\in [1,n]$, 
$\vec{S}_i\subseteq \vec{T}_i$. Sometimes, to emphasize a pointwise definition, 
a dotted notation can be used such as in $f\dotted{\leq} g$ for
the pointwise ordering between functions.

A quasiordered set \(\tuple{D,\leq}\) (qoset), compactly denoted by $D_\leq$, 
is a set $D$ endowed with a quasiorder (qo) relation $\leq$ on $D$, \textit{i.e.}\ a reflexive and
transitive binary relation. 
A qoset $D_\leq$ satisfies the ascending (resp.\ descending) chain condition (ACC, resp.\ DCC) if  $D$ contains no countably infinite  sequence of distinct elements \(\{x_i\}_{i \in \mathbb{N}}\) such that, for all $i\in\bN$, \(x_i \leq x_{i{+}1}\) (resp. \(x_{i{+}1} \leq x_{i}\)).
An antichain in a qoset $D$ is a subset $X\subseteq D$ such that 
any two distinct elements in $X$ are incomparable for the qo $\leq$. 
A qoset $D_\leq$ is a partially ordered set (poset) when \(\leq\) is antisymmetric.
A subset $X\subseteq D$ of a poset is directed if $X$ is nonempty and every pair of elements in $X$ has an upper bound in $X$.
A poset is a directed-complete partial order (CPO) if it has 
the least upper bound (lub) of all its directed subsets. 
A
complete lattice is a poset having  
the lub of all its arbitrary (possibly empty) subsets (and therefore also having
arbitrary glb's). 
In a complete lattice (or CPO), $\vee$ (or $\sqcup$) and $\wedge$ (or $\sqcap$) denote, resp., lub and glb, 
and $\bot$ and $\top$ denote, resp., least and greatest element.

Let \(P_\leq\) be a poset and  
\(f:P \ra P\). Then, $\Fix(f)\ud \{x\in P \mid f(x)=x\}$, 
$\preFix(f)\ud \{x\in P \mid f(x)\leq x\}$, 
$\postFix(f)\ud \{x\in P \mid f(x)\geq x\}$, and $\lfp(f)$,
$\gfp(f)$ denote, resp., the least and greatest fixpoint 
in $\Fix(f)$, when they exist. 
Let us recall Knaster-Tarski fixpoint theorem: if $\tuple{C,\leq,\vee,\wedge}$ 
is a complete lattice and $f:C\ra C$ is monotonic (\textit{i.e.}, $x\leq y$ implies $f(x)\leq f(y)$) then 
$\tuple{\Fix(f),\leq}$ is a complete lattice, $\lfp(f)=\wedge\preFix(f)$
and $\gfp(f)=\vee\postFix(f)$. Also, Knaster-Tarski-Kleene 
fixpoint theorem states that if $\tuple{C,\leq,\vee,\bot}$ is a CPO 
and $f:C\ra C$ is Scott-continuous (\textit{i.e.}, $f$ preserves lub of directed subsets) 
then $\lfp(f)=\vee_{i\in \bN} f^i(\bot)$, where, for all $x\in C$ and $i\in \bN$, 
$f^0(x)\ud x$ and $f^{i+1}(x)\ud f(f^i(x))$;
dually, if $\tuple{C,\leq,\wedge,\top}$ is a dual-CPO and 
$f:C\ra C$ is Scott-co-continuous then $\gfp(f)=\wedge_{i\in \bN} f^i(\top)$. 
A function $f:C\ra C$ on a complete lattice is 
additive when it preserves arbitrary lubs.

\subsection{Abstract Domains}
\label{appendix:background}
Let us recall some basic notions on closures and Galois connections which are commonly used in abstract interpretation \cite{CC77,CC79} to define abstract domains 
(see, \textit{e.g.}, \cite{mine17,rival}).  
Closure operators and Galois connections are equivalent notions
and are both used for 
defining the notion of approximation in abstract interpretation, where closure operators bring the advantage of defining abstract domains independently of a specific representation which is required by 
Galois connections. 

An upper closure operator (uco), or simply upper closure, on a poset $C_\leq$ 
is a function \(\mu:C\ra C\) which is:
monotonic, idempotent and
extensive, \textit{i.e.}, \(x \leq \mu(x)\) for all \(x \in C\).
Dually, a lower closure operator (lco) 
$\eta:C\ra C$ is monotonic, idempotent and reductive, \textit{i.e.},
\(\eta(x) \leq x\) for all \(x \in C\).
The set of all upper/lower closures on $C_\leq$ 
is denoted by $\uco(C_\leq)$/$\lco(C_\leq)$.
We write \(c \in \mu(C)\), or simply \(c \in \mu\), to denote that  
there exists \(c' \in C\) such that \(c = \mu(c')\), and 
we recall that this happens iff $\mu(c) = c$. 
Let us also recall that $\tuple{\mu(C),\leq}$ is closed under glb
of arbitrary subsets and, conversely, 
$X\subseteq C$ is the image of some $\mu\in \uco(C)$ iff
$X$ is closed under glb
of all its subsets, and in this case 
$\mu(c)=\wedge \{c'\in X \mid c \leq c'\}$ holds. 
 Dually, $X\subseteq C$ is closed under arbitrary lub of its subsets iff $X$ is the image a lower closure
$\eta\in \lco(C)$, and in this case $\eta(c)=\vee \{c'\in X \mid c'\leq c\}$.  
In abstract interpretation, a closure \(\mu\in \uco(C_\leq)\) on a concrete domain $C_\leq$ plays
the role of an abstract domain having best approximations: $c\in C$ is approximated by any $\mu(c')$ such that $c\leq \mu(c')$ and $\mu(c)$ is the best approximation of $c$ in $\mu$ because $\mu(c)=\wedge\{\mu(c') \mid c'\in C,\, c\leq \mu(c')\}$.

A Galois Connection (GC) (also called adjunction) between two posets \(\tuple{C,\leq_C}\), called concrete domain, and \(\tuple{A,\leq_A}\), called abstract domain, consists of two maps \(\alpha: C\ra A\) and \(\gamma: A\ra C\) such that \(\alpha(c)\leq_A a \:\Lra\: c\leq_C \gamma(a)\) holds. A GC is called Galois insertion (GI) when $\alpha$ is surjective or,
equivalently, $\gamma$ is injective. Any GC can be transformed into a GI simply by removing useless elements in $A\smallsetminus \alpha(C)$ from the abstract domain $A$. 
A GC/GI is denoted by $(C_{\leq_C},\alpha,\gamma, 
A_{\leq_A})$.
The function $\alpha$ is called the left-adjoint of $\gamma$, and, dually, 
$\gamma$ is called the right-adjoint of $\alpha$. This terminology is justified by the fact that if
$\alpha:C\ra A$ 
admits a right-adjoint $\gamma:A\ra C$ then this is unique, and this dually holds for left-adjoints.
GCs and ucos are equivalent notions because
any GC $\cG=(C,\alpha,\gamma, A)$ induces a closure $\mu_\cG\ud \gamma\circ \alpha \in \uco(C)$, any $\mu\in \uco(C)$ induces a GI $\cG_\mu\ud
(C,\mu,\lambda x.x, \mu(C))$, and these two transforms are inverse of 
each other.  

\subsection{Transition Systems}
Let $\cT=\tuple{\Sigma,\tau}$ be a transition system where 
$\Sigma$ is a set of states
and  $\tau \subseteq \Sigma\times \Sigma$ is a transition relation. 
As usual, a transition relation can be equivalently defined by one
of the following transformers of type $\wp(\Sigma)\ra
\wp(\Sigma)$:
\begin{align*}
&\!\pre(X)\ud \{s\in \Sigma\mid \exists s'\in X. (s,s')\in \tau\}
\;\pret(X)\ud \{s\in \Sigma\mid \forall s'.(s,s')\in \tau \Ra\! s'\in X\}\\
&\!\post(X)\!\ud \{s'\!\in \Sigma\mid \exists s\in X. (s,s')\in \tau\}
\;\postt(X)\ud\{s'\!\in \Sigma\mid \forall s.(s,s')\in \tau \Ra\! s\in X\}
\end{align*}
We will equivalently specify a transition system by one of the above 
transformers (typically $\post$) in place of the transition relation $\tau$. 
Let us also recall (see \textit{e.g.}~\cite{CC99}) that $\tuple{\pre,\postt}$ and $\tuple{\post,\pret}$ are pairs of adjoint functions.
The 
set of reachable states of $\cT$ from a set of initial states $\Sigma_0\subseteq \Sigma$ is given by $\Reach[\cT,\Sigma_0]\ud 
\lfp(\lambda X. \Sigma_0 \cup \post(X))$, and  
$\cT$ satisfies a safety property $P\subseteq
\Sigma$ when $\Reach[\cT,\Sigma_0]\subseteq P$ holds.

\subsection{Inductive Invariant Principle}
Given a transition system $\cT=\tuple{\Sigma,\tau}$, a set of states $I\in \wp(\Sigma)$ is an \emph{inductive invariant} for $\cT$ w.r.t.\ $\tuple{\Sigma_0,P}\in \wp(\Sigma)^2$ when: 
(i) $\Sigma_0\subseteq I$; (ii) $\post(I)\subseteq I$; (iii) $I\subseteq P$. 
An inductive invariant $I$ allows us to prove that $\cT$ is safe, \textit{i.e.}\ 
$\Reach[\cT,\Sigma_0]\subseteq P$,  
by the \emph{inductive invariant principle} (a.k.a.\ fixpoint induction principle), a consequence of Knaster-Tarski fixpoint theorem: If $C_\leq$ is a complete lattice, $c'\in C$ and 
$f:C\ra C$ is monotonic then 
\begin{equation}\label{pii}
\lfp(f)\leq c' \Lra \exists i\in C. f(i)\leq i \wedge i\leq c'
\end{equation} 
In particular, given $c,c'\in C$, since $c\vee_C f(i)\leq i$ iff 
$c\leq i \wedge f(i)\leq i$, it turns out that:
\begin{equation}\label{inv}
\lfp(\lambda x. c\vee_C f(x))\leq c' \Lra \exists i\in C. c\leq i \wedge 
f(i)\leq i \wedge i\leq c'
\end{equation} 
One such $i\in C$ such that $c\leq i\, \wedge\, 
f(i)\leq i\, \wedge \, i\leq c$ is called an inductive invariant of $f$
for $\tuple{c,c'}$. 
Hence, \eqref{inv} is applied to the function $\lambda 
X. \Sigma_0 \cup \post(X): \wp(\Sigma)\ra \wp(\Sigma)$, which 
is monotonic on the complete lattice $\wp(\Sigma)_\subseteq$, so that
$\lfp(\lambda 
X. \Sigma_0 \cup \post(X))\subseteq P$ holds iff  there exists  
an inductive invariant $I$ for $\cT$ w.r.t.\ $\tuple{\Sigma_0,P}$.
In most interesting contexts for defining 
transition systems, 
the decision problem of the existence of a (concrete) inductive invariant for
a class of transition systems w.r.t.\ 
a set of initial states and some safety property 
is undecidable. %

\section{Abstract Inductive Invariants}

Padon et al.\ \cite{padon16}, Hrushovski et al.~\cite{worrell18}, 
Fijalkow et al.~\cite{worrell19}, Monniaux~\cite{mon19}, 
Shoham \cite{shoham18}, among the others, 
consider a notion of abstract inductive invariant and study 
the corresponding decidability/undecidability and synthesis problems. 
The common approach of this stream of works 
consists in restricting  the range of inductive invariants
from a concrete domain $C$ to some abstraction $A_C$ of $C$. 
In a basic and general form, a domain $A_C$ of abstract invariants 
is simply a subset 
of $C$. Let us formalize abstract inductive invariants in general 
order-theoretic terms. Given a class $\cC$ of complete
lattices  and, for all $C\in \cC$, 
a class of functions $\cF_C \subseteq C\ra C$, a 
set of initial properties $\Init_C\subseteq C$,
a set of safety properties $\Safe_C\subseteq C$, and
some abstract domain $A_C \subseteq C$, a first problem is the  
decidability of the following decision question:
\begin{equation}\label{ainv}
\forall C\in \cC.\forall f\in \cF_C. \forall c\in \Init_C. 
\forall c'\in \Safe_C. 
\exists i\in^?\!\! A_C.\: c\leq i \wedge 
f(i)\leq i \wedge i\leq c'
\end{equation} 
where one such $i\in A_C$ is called an \emph{abstract inductive invariant} 
for $f$ and $\tuple{c,c'}\in C^2$. 
On the other hand, the synthesis problem consists in designing algorithms 
which output abstract inductive invariants in $A_C$ or notify
that no inductive invariant in $A_C$ exists.  

Given a transition system $\cT=\tuple{\Sigma,\tau}$ whose 
successor transformer is $\post$,  
the problem \eqref{ainv} is instantiated to 
$C_\leq=\wp(\Sigma)_\subseteq$, $f=\post(X)$, 
$c=\Sigma_0\in \wp(\Sigma)$ set of initial states and $c'=P\in \wp(\Sigma)$ safety property.  
When the transition system is generated by 
some imperative program, $\Sigma_0$ are the states of some initial control
node and $P$ is a safety property given by the states which are not in some bad control node, 
abstract inductive invariants are called \emph{separating invariants} and  
the decision problem~\eqref{ainv} is called \emph{Monniaux problem} 
by Fijalkow et al.~\cite{worrell19}, because this was first 
formulated by Monniaux \cite{mon17,mon19}.

\subsection{Abstract Inductive Invariant Principle}
Our working assumption is that in problem~\eqref{ainv} 
invariants $i$ range in an abstract domain $A$
as defined in abstract interpretation \cite{CC77,CC79}.  
\begin{assumption}\label{assu}
$\tuple{A,\leq_A}$ is an abstract domain 
of the complete lattice 
$\tuple{C,\leq_C}$ which has best approximations, \textit{i.e.}, one of these 
two equivalent assumptions is satisfied: 
\begin{enumerate}[{\rm (i)}]
\item $(C_{\leq_C},\alpha,\gamma,A_{\leq_A})$ is a Galois insertion;
\item $\tuple{A,\leq_A}=\tuple{\mu(C),\leq_C}$ for some upper closure $\mu\in \uco(C_{\leq_C})$. \qed
\end{enumerate}
\end{assumption} 

Under Assumption~\ref{assu}, 
let us recall that if $f:C\ra C$ is a concrete monotonic function then 
the mappings $\alpha f \gamma: A\ra A$, for the case of GIs, and $\mu f: \mu(C) \ra \mu(C)$, for the case of ucos, are called
\emph{best correct approximation} (bca) in $A$ of $f$. This is justified by the observation that 
an abstract function $f^\sharp:A\ra A$ (or $f^\sharp:\mu(C)\ra \mu(C)$ for ucos) 
is a correct (or sound) 
approximation of $f$ 
when $\alpha f\gamma \dotted{\leq}_A f^\sharp$ (or $\mu f \dotted{\leq}_C f^\sharp$ for ucos) holds. Our first result is an \emph{abstract} 
inductive invariant principle which restricts the invariants of $f$ in \eqref{pii} to 
those ranging in an abstract domain $A$: when the abstract domain $A$ is 
specified by a GI, 
this means that $a\in A$ is an abstract invariant of $f$ when $f\gamma(a)\leq_C \gamma(a)$ holds; when the abstract domain is a closure $\mu\in \uco(C)$, this means 
that $a\in\mu\subseteq C$ 
is an abstract invariant of $f$ when $fa\leq_C a$ holds.

\begin{lemma}[\textbf{Abstract Inductive Invariant Principle}]\label{char}\ \
Let $(C_{\leq_C},\alpha,\gamma,A_{\leq_A})$ be a GI. 
For all $c'\in C$ and $a'\in A$: 
\begin{enumerate}[{\rm (a)}]
\item 
$\gamma(\lfp(\alpha f \gamma)) \leq_C c' \Lra \exists a\in A.\: f\gamma(a) \leq_C \gamma(a) \wedge \gamma(a)\leq_C c'$;
\item  
$\lfp(\alpha f \gamma) \leq_A a' \Lra \exists a\in A.\: f\gamma(a) \leq_C \gamma(a) \wedge \gamma(a)\leq_C \gamma(a')$.
\end{enumerate}
\end{lemma}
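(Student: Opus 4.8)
The plan is to reduce both equivalences to the concrete inductive invariant principle~\eqref{pii} applied \emph{on the abstract domain} $A$, using the adjunction to translate abstract (pre)fixpoint conditions into the concrete inequalities appearing in the statement. First I would record the two structural facts that make this work. Since $(C_{\leq_C},\alpha,\gamma,A_{\leq_A})$ is a GI, $A$ is order-isomorphic via $\gamma$ to the image $\mu(C)$ of the induced closure $\mu\ud\gamma\alpha$, which is closed under arbitrary glb's and hence is a complete lattice; thus $A$ is a complete lattice. Moreover $\alpha f\gamma:A\ra A$ is monotonic as a composition of monotonic maps, so $\lfp(\alpha f\gamma)$ exists and \eqref{pii} (equivalently Knaster--Tarski) applies to $\alpha f\gamma$ on $A$.

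The crux is a translation step, which I expect to be the only real content of the argument. For every $a\in A$, the abstract prefixpoint condition $\alpha f\gamma(a)\leq_A a$ is equivalent to the concrete condition $f\gamma(a)\leq_C\gamma(a)$: this is the adjunction $\alpha(x)\leq_A a \Lra x\leq_C\gamma(a)$ instantiated at $x=f\gamma(a)$. Similarly, since $\alpha\gamma=\id_A$ in a GI, the order on $A$ mirrors that of its $\gamma$-image, i.e.\ $a\leq_A a' \Lra \gamma(a)\leq_C\gamma(a')$ (forward by monotonicity of $\gamma$, backward by applying $\alpha$ and using $\alpha\gamma=\id_A$). Everything else is bookkeeping on top of these two equivalences.

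For part~(b) I would simply instantiate \eqref{pii} on $A$ to obtain $\lfp(\alpha f\gamma)\leq_A a' \Lra \exists a\in A.\, \alpha f\gamma(a)\leq_A a \wedge a\leq_A a'$, and then rewrite the two conjuncts through the translation step as $f\gamma(a)\leq_C\gamma(a)$ and $\gamma(a)\leq_C\gamma(a')$, which is exactly the claimed right-hand side. For part~(a) I would argue the two directions directly. For $(\La)$, from $f\gamma(a)\leq_C\gamma(a)$ the translation gives that $a$ is a prefixpoint of $\alpha f\gamma$, so $\lfp(\alpha f\gamma)\leq_A a$ by Knaster--Tarski, and monotonicity of $\gamma$ yields $\gamma(\lfp(\alpha f\gamma))\leq_C\gamma(a)\leq_C c'$.

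For $(\Ra)$ in part~(a), I would take the witness to be $a\ud\lfp(\alpha f\gamma)$ itself: being a fixpoint, $\alpha(f\gamma(a))=\alpha f\gamma(a)=a$, so $\alpha(f\gamma(a))\leq_A a$ and the adjunction gives $f\gamma(a)\leq_C\gamma(a)$, while $\gamma(a)=\gamma(\lfp(\alpha f\gamma))\leq_C c'$ holds by hypothesis. The mild subtlety worth flagging is exactly here: the existential witness is the least fixpoint, and one must use fixpoint \emph{equality} (not merely a prefixpoint inequality) to push the adjunction in the direction that produces $f\gamma(a)\leq_C\gamma(a)$. This completes the plan for both~(a) and~(b).
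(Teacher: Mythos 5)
Your proposal is correct and takes essentially the same route as the paper: the same adjunction-based translation ($\alpha f\gamma(a)\leq_A a \Lra f\gamma(a)\leq_C\gamma(a)$ and $a\leq_A a'\Lra \gamma(a)\leq_C\gamma(a')$ in a GI), the same witness $a\ud\lfp(\alpha f\gamma)$ for (a)$(\Ra)$, and the same prefixpoint/Knaster--Tarski argument for (a)$(\La)$. The only variation is in (b): you instantiate \eqref{pii} directly for $\alpha f\gamma$ on the complete lattice $A$, whereas the paper's written proof derives (b) from (a) --- but the paper itself points out your route as an equally valid alternative immediately after its proof, so this is not a substantive divergence. One small inaccuracy in your closing remark: fixpoint \emph{equality} is not needed for (a)$(\Ra)$; the prefixpoint inequality $\alpha f\gamma(a)\leq_A a$ alone suffices, since the adjunction direction $\alpha(x)\leq_A a \Ra x\leq_C\gamma(a)$ is exactly what pushes it down to $f\gamma(a)\leq_C\gamma(a)$.
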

\begin{proof}
Let us first recall that in  a GI, for all $a,a'\in A$, $a\leq_A a' \Lra \gamma(a)\leq_C \gamma(a')$ holds.

\noindent
(a) $(\La)$ We have that:
\begin{align*}
\exists a\in A.\: f\gamma(a) \leq_C \gamma(a) \wedge \gamma(a)\leq_C c' & \Lra \quad[\text{by GC}]\\
\exists a\in A.\: \alpha f\gamma(a) \leq_A a \wedge \gamma(a)\leq_C c'  & \Ra \quad[\text{by $Fx \leq x \Ra \lfp(F) \leq x$}]\\
\exists a\in A.\: \lfp(\alpha f\gamma) \leq_A a \wedge \gamma(a)\leq_C c'  & \Lra\quad[\text{by GI}]\\
\exists a\in A.\: \gamma(\lfp(\alpha f\gamma)) \leq_C \gamma(a) \wedge \gamma(a)\leq_C c'  &\Ra \quad[\text{by transitivity]}\\
 \gamma(\lfp(\alpha f \gamma)) \leq_C c' &
\end{align*}
$(\Ra)$ 
Define $a\ud \lfp(\alpha f\gamma)\in A$. It turns out that $\alpha f\gamma (a) \leq_A a$ so that, by GC, 
$f\gamma(a) \leq_C \gamma(a)$, and, by hypothesis, 
$\gamma(a)\leq_C c'$. 

\noindent
(b) It turns out that:
\begin{align*}
\exists a\in A.\: f\gamma(a) \leq_C \gamma(a) \wedge \gamma(a)\leq_C \gamma(a') & \Lra 
\quad \text{[By Lemma~\ref{char}~(a)]}\\
\gamma(\lfp(\alpha f \gamma)) \leq_C \gamma(a') & \Lra \quad [\text{by GI}]\\
\lfp(\alpha f \gamma) \leq_A a' & \tag*{\qed}
\end{align*} 
\end{proof}

\noindent
It is worth stating Lemma~\ref{char}~(a) in an equivalent form for an abstract domain represented by a closure $\mu\in \uco(C)$: $\lfp(\mu f) \leq_C c' \Lra \exists a\in \mu.\: fa \leq_C a \wedge a\leq_C c'$.

Let us observe that point~(b) is an easy consequence of  
point~(a), because, by surjectivity of $\alpha$ in GIs, for all $a'\in A$, there exists some $c'\in C$ such that $a'=\alpha(c')$, and $\gamma(\lfp(\alpha f \gamma)) \leq_C \gamma(\alpha(c')) \Lra
 \lfp(\alpha f \gamma) \leq_A \alpha(c')$ holds. Moreover, point~(b)
easily follows from the inductive invariant
principle~\eqref{pii} for the bca $\alpha f \gamma: A\ra A$.  
On the other hand, point~(a) cannot be obtained from (b), \textit{i.e.}\ (a) is \emph{strictly stronger} than~(b), because (a) allows us to  prove concrete 
properties $c'\in C$ which are not exactly represented by $A$ (\textit{i.e.}, $c'\not\in \gamma(A)$) by 
abstract inductive invariants in $A$. 
This is shown
by the following tiny example. 

\begin{examplebf}
Consider a 4-point chain $C = \{1<2<3<4\}$,
the function 
$f:C\ra C$ defined by $\{1 \mapsto 1; 
2\mapsto 2; 3\mapsto 4; 4\mapsto 4\}$,
and the abstraction $A=\{2,4\}$ with $\gamma=\id$ and $\alpha=
\{1 \mapsto 2;
2\mapsto 2; 3\mapsto 4; 4\mapsto 4\}$. Here, we have that  
$\alpha f \gamma =\{2 \mapsto 2;
4\mapsto 4\}$ and $\lfp(\alpha f \gamma)=2$. 
In this case, Lemma~\ref{char}~(b) allows us  
to prove all the abstract properties 
$a'\in A$ by abstract inductive invariants, 
while Lemma~\ref{char}~(a) allows us to prove an additional
concrete property $3\in C\smallsetminus\gamma(A)$, which is not exactly represented by $A$,
by an abstract inductive invariant,  and this would not be possible
by resorting to Lemma~\ref{char}~(b).
Also, $\gamma(\lfp(\alpha f \gamma))\not\leq 1$ holds, thus, by Lemma~\ref{char}~(a), the concrete property $1$ cannot be proved by an abstract inductive invariant in $A$, 
whereas Lemma~\ref{char}~(b) does not allow us to infer this.  
\qed
\end{examplebf}

Lemma~\ref{char}~(b) tells us that 
the existence of an abstract inductive invariant of $f$ proving 
an abstract property $a'\in A$ is equivalent to 
the fact that the least fixpoint of the best correct approximation $\alpha f \gamma$
entails
$a'$. This formalizes for an abstract domain satisfying 
Assumption~\ref{assu}
an observation in 
\cite[Section~1]{worrell19} stating
that ``\emph{the existence of some abstract inductive 
invariant for $\alpha f \gamma$ proving $a'$ 
is equivalent to whether the strongest abstract 
invariant $\lfp(\alpha f \gamma)$ entails $a'$}'', \textit{i.e.}\ is inductive. 
If, instead, we aim at proving \emph{any} concrete property $c'\in C$,
possibly not in $\gamma(A)$, 
by an abstract inductive invariant
then
Lemma~\ref{char}~(a) states that this is equivalent to the stronger condition $\gamma(\lfp(\alpha f \gamma)) \leq_C c'$.

As a consequence of Lemma~\ref{char}~(a) we derive the following characterization of the problem~\eqref{ainv}.

\begin{corollary}\label{coro-three}
Let $\cF\subseteq C\ra C$ and $\Init,\Safe\subseteq C$. 
The Monniaux decision 
problem $\forall f\in \cF.\forall c\in \Init.\forall c'\in \Safe. 
\exists a\in^?\!\! A.\: c \leq_C \gamma(a) \wedge 
f\gamma(a) \leq_C \gamma(a) \wedge \gamma(a)\leq_C c'$ is decidable iff
the decision problem $\forall f\in \cF.\forall c\in \Init.\forall c'\in \Safe. \gamma(\lfp(\lambda x\in A. \alpha(c) \vee_A  \alpha f \gamma(x))) \leq_C^? c'$ is decidable.
\end{corollary}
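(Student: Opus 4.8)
The plan is to reduce Corollary~\ref{coro-three} to Lemma~\ref{char}~(a) by absorbing the initial-state constraint $c\leq_C\gamma(a)$ into the function being iterated, exactly as the transition from \eqref{pii} to \eqref{inv} does in the concrete setting. The key observation is that the two decision problems in the statement have matching quantifier prefixes $\forall f.\forall c.\forall c'$, so it suffices to prove the inner biconditional pointwise: for each fixed $f\in\cF$, $c\in\Init$, $c'\in\Safe$, the existence of an $a\in A$ with $c\leq_C\gamma(a)\wedge f\gamma(a)\leq_C\gamma(a)\wedge\gamma(a)\leq_C c'$ is equivalent to $\gamma(\lfp(\lambda x\in A.\,\alpha(c)\vee_A\alpha f\gamma(x)))\leq_C c'$. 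Once this pointwise equivalence is in hand, decidability of one universally-quantified problem is immediately interchangeable with decidability of the other, since they are literally the same predicate under the same quantifiers.

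First I would apply Lemma~\ref{char}~(a) not to $f$ directly but to the modified monotone map $g\ud\lambda x.\,\gamma\alpha(c)\vee_C f(x):C\ra C$. The idea is to show that an abstract inductive invariant $a$ for $g$ proving $c'$ coincides with an abstract inductive invariant $a$ for $f$ that additionally satisfies $c\leq_C\gamma(a)$. Concretely, the condition $g\gamma(a)\leq_C\gamma(a)$ unfolds to $\gamma\alpha(c)\vee_C f\gamma(a)\leq_C\gamma(a)$, which by the property of lubs holds iff both $\gamma\alpha(c)\leq_C\gamma(a)$ and $f\gamma(a)\leq_C\gamma(a)$. Since $c\leq_C\gamma\alpha(c)$ and $\gamma\alpha$ is a closure, $\gamma\alpha(c)\leq_C\gamma(a)$ is equivalent to $c\leq_C\gamma(a)$ (using $\gamma(a)\in\gamma(A)$, a fixpoint of $\gamma\alpha$, together with the adjunction $\alpha(c)\leq_A a\Lra c\leq_C\gamma(a)$). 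This is the step that makes the initial condition and the abstraction interact correctly, and it is where the Galois-insertion hypothesis is genuinely used.

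It then remains to connect $\lfp(\alpha g\gamma)$ to the displayed fixpoint $\lfp(\lambda x\in A.\,\alpha(c)\vee_A\alpha f\gamma(x))$. I would compute $\alpha g\gamma(x)=\alpha(\gamma\alpha(c)\vee_C f\gamma(x))$ and argue this equals $\alpha(c)\vee_A\alpha f\gamma(x)$: additivity of $\alpha$ (it is a left adjoint, hence preserves lubs) distributes $\alpha$ over $\vee_C$, giving $\alpha\gamma\alpha(c)\vee_A\alpha f\gamma(x)$, and $\alpha\gamma\alpha=\alpha$ in a GI collapses the first term to $\alpha(c)$. Hence the two functions on $A$ are equal and have the same least fixpoint. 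Applying Lemma~\ref{char}~(a) to $g$ yields $\gamma(\lfp(\alpha g\gamma))\leq_C c'\Lra\exists a.\,g\gamma(a)\leq_C\gamma(a)\wedge\gamma(a)\leq_C c'$, and substituting the two identities just established turns this into precisely the desired pointwise equivalence.

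The main obstacle I anticipate is purely the careful bookkeeping of the closure/adjunction identities ($c\leq_C\gamma\alpha(c)$, $\alpha\gamma\alpha=\alpha$, and the additivity of $\alpha$), rather than any conceptual difficulty; the skeleton is a mechanical transport of \eqref{inv} through the abstraction. A minor subtlety worth verifying is that $g$ is monotone (immediate, as a lub of a constant and a monotone map) so that Lemma~\ref{char}~(a) genuinely applies, and that $\lambda x\in A.\,\alpha(c)\vee_A\alpha f\gamma(x)$ is a well-defined endofunction on $A$ whose least fixpoint exists, which follows since $A$ inherits completeness through the GI.
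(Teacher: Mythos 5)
Your proof is correct and takes essentially the same route as the paper: both reduce to Lemma~\ref{char}~(a) applied to the transfer function with the initial constraint joined in, and both identify $\lambda x\in A.\,\alpha(c)\vee_A\alpha f\gamma(x)$ as its best correct approximation via additivity of $\alpha$. The only cosmetic difference is that you absorb $\gamma\alpha(c)$ rather than $c$ into the concrete function, which costs you one extra (correctly handled) equivalence $\gamma\alpha(c)\leq_C\gamma(a)\Lra c\leq_C\gamma(a)$ that the paper's choice $\lambda x\in C.\,c\vee_C f(x)$ avoids.
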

\begin{proof} 
By Lemma~\ref{char}~(a), because 
$\lambda x\in A. \alpha(c) \vee_A  \alpha f \gamma(x) = 
\lambda x\in A. \alpha(c \vee_C f \gamma(x))$ is the best correct approximation
of $\lambda x\in C. c \vee_C f(x)$. 
\qed
\end{proof}

Moreover, as a consequence of Lemma~\ref{char}~(b) we obtain the following abstract invariant synthesis algorithm. 

\begin{corollary}\label{coro-algo}
Assume that the lub $\vee_A: A\times A \ra A$ and the bca 
$\alpha f \gamma : A \ra A$ are computable, the partial order $\leq_A^?$ is decidable and $A$ is an ACC CPO.
For all $c\in C$ such that $\alpha(c)$ is computable and $a'\in A$, the following procedure $\AInv(f,A,c,a')$:
\vspace*{-2.5pt}
{\rm 
\begin{align*}
&i:=\alpha(c);\\[-2.5pt]
&\textbf{while~}i\leq_A a' 
\textbf{~do~} \{\textbf{~if~}\alpha f\gamma(i) \leq_A i \textbf{~return~}i;
\textbf{~else~} i:= \alpha f\gamma(i);\}\\[-2.5pt]
&\textbf{return~}\textit{no abstract inductive invariant for $f$ and $\tuple{c,\gamma(a')}$};
\end{align*}
}
\\[-12.5pt]
\noindent
is a terminating algorithm which outputs the least abstract inductive invariant
for $f$ and $\tuple{c,\gamma(a')}$, when one such abstract 
inductive invariant exists, otherwise 
outputs ``no abstract inductive invariant''. 
\end{corollary}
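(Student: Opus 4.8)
The plan is to recognize the loop as a Kleene iteration that computes a least fixpoint, and then to read off correctness from the characterization in Lemma~\ref{char}. First I would restate the target algebraically. By the Galois connection, $a\in A$ is an abstract inductive invariant for $f$ and $\tuple{c,\gamma(a')}$ iff $\alpha(c)\leq_A a$, $\alpha f\gamma(a)\leq_A a$ and $a\leq_A a'$; equivalently, setting $G\ud\lambda x\in A.\,\alpha(c)\vee_A\alpha f\gamma(x)$ (the best correct approximation of $\lambda x.\,c\vee_C f(x)$ used in Corollary~\ref{coro-three}), iff $G(a)\leq_A a$ and $a\leq_A a'$. Hence the abstract inductive invariants are exactly the elements of $\preFix(G)$ lying below $a'$. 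Since $A=\mu(C)$ is a complete lattice and $G$ is monotone, Knaster--Tarski gives $\lfp(G)=\wedge\preFix(G)$, so that an abstract inductive invariant exists iff $\lfp(G)\leq_A a'$, and in that case $\lfp(G)$ is the least one. This is precisely the value the procedure must return.

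Next I would show that the loop computes $\lfp(G)$ with an on-the-fly comparison against $a'$, the whole argument resting on the loop invariant $\alpha(c)\leq_A i\leq_A\lfp(G)$. Granting this invariant, the two exits are immediate. If the test $\alpha f\gamma(i)\leq_A i$ succeeds then, together with $\alpha(c)\leq_A i$, we get $G(i)=\alpha(c)\vee_A\alpha f\gamma(i)\leq_A i$, so $i\in\preFix(G)$ and therefore $\lfp(G)\leq_A i$; combined with $i\leq_A\lfp(G)$ this forces $i=\lfp(G)$, and the guard $i\leq_A a'$ certifies that $i$ is a genuine, and hence the least, inductive invariant. Conversely, if the guard fails, i.e.\ $i\not\leq_A a'$, then from $i\leq_A\lfp(G)$ we could not have $\lfp(G)\leq_A a'$, so $\lfp(G)\not\leq_A a'$ and no inductive invariant exists, which justifies the ``no abstract inductive invariant'' output.

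Termination would then follow from the ACC hypothesis: the iterates form an increasing chain $\alpha(c)=i_0\leq_A i_1\leq_A\cdots$ that must stabilise after finitely many steps at a point where $\alpha f\gamma(i)\leq_A i$ holds (firing the return branch), unless the guard fails earlier; decidability of $\leq_A$ and computability of $\alpha(c)$, $\vee_A$ and $\alpha f\gamma$ make every step effective. I expect the main obstacle to lie entirely in the lower half of the loop invariant, $\alpha(c)\leq_A i$. The upper bound $i\leq_A\lfp(G)$ and the monotone increase of the chain are routine consequences of monotonicity of $G$, but $\alpha(c)\leq_A i$ is exactly what turns a returned prefixpoint of $\alpha f\gamma$ into an actual inductive invariant, since it encodes the requirement $c\leq_C\gamma(i)$. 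Because $\alpha f\gamma$ need not be extensive, this lower bound is preserved only if each update re-joins $\alpha(c)$ — that is, the update should be read as $i:=\alpha(c)\vee_A\alpha f\gamma(i)$ (equivalently $i:=i\vee_A\alpha f\gamma(i)$) rather than the bare $i:=\alpha f\gamma(i)$, as otherwise an iterate may drop below $\alpha(c)$ and the returned value would fail $c\leq_C\gamma(i)$. Pinning down this invariant, and thereby the exact form of the update, is the step I would treat with the most care.
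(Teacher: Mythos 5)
Your proof is correct and follows essentially the same route as the paper's: both identify the loop with the Kleene iteration of $G\ud\lambda x.\,\alpha(c)\vee_A\alpha f\gamma(x)$, characterize the abstract inductive invariants for $f$ and $\tuple{c,\gamma(a')}$ as the prefixpoints of $G$ lying below $a'$ (so one exists iff $\lfp(G)\leq_A a'$, in which case $\lfp(G)$ is the least one), and obtain termination from the ACC hypothesis because the iterates form an ascending chain.

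However, the point you single out as delicate is not merely delicate: the algorithm as printed, with the bare update $i:=\alpha f\gamma(i)$, is actually incorrect, and the paper's own proof silently assumes your repaired update. The paper's argument asserts that the computed iterates form an ascending chain and that the returned value equals $\lfp(\lambda a.\,\alpha(c)\vee_A\alpha f\gamma(a))$; both claims hold for the update $i:=\alpha(c)\vee_A\alpha f\gamma(i)$ (equivalently $i:=i\vee_A\alpha f\gamma(i)$) but fail for the bare one. For a concrete failure, let $A=C=\{\bot,x,y,\top\}$ with $x,y$ incomparable, $\alpha=\gamma=\id$, $\alpha(c)=x$ and $a'=\top$. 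If $\alpha f\gamma$ is the monotone map fixing $\bot,\top$ with $x\mapsto y$ and $y\mapsto y$, the bare iteration returns $y$, which is not an inductive invariant since $\alpha(c)=x\not\leq_A y$ (the least one is $\top$). If instead $\alpha f\gamma$ is the monotone map fixing $\bot,\top$ that swaps $x$ and $y$, the bare iteration alternates $x,y,x,y,\dots$ and never terminates, although $\lfp(G)=\top\leq_A a'$, so an invariant does exist. With your corrected update both runs return $\top=\lfp(G)$, as required. So your loop invariant $\alpha(c)\leq_A i\leq_A\lfp(G)$, which forces the join into the update, is exactly what makes the corollary true; treating it as the crux was the right call.
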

\begin{proof}
The hypotheses guarantee that the procedure $\AInv$ is a terminating algorithm, 
in particular because the sequence of computed iterates $i$ is an ascending chain in $A$.  
If the 
algorithm $\AInv$ outputs $i$ then $i=\lfp(\lambda a. \alpha(c) \vee_A \alpha f \gamma(a))\leq_A a'$, so that $i = \wedge \{a\in A \mid \alpha(c) \leq_A i,\, 
\alpha f \gamma(i)\leq_A i,\, i\leq_A a'\}$, that is, $i$ 
is the least inductive invariant in $A$ 
for $f$ and $\tuple{c,\gamma(a')}$. 
If the algorithm $\AInv$ outputs 
``no abstract inductive invariant for $f$ and $\tuple{c,\gamma(a')}$'' 
then there exists 
$j\in \bN$ such that 
$(\lambda a. \alpha(c) \vee_A \alpha f \gamma(a))^j(\bot_A) \not\leq_A a'$, 
so that $\lfp(\lambda a. \alpha(c) \vee_A \alpha f \gamma(a))\not\leq_A a'$, 
that is, there exists no inductive invariant in $A$ 
for $f$ and $\tuple{c,\gamma(a')}$. 
\qed\end{proof}

\subsection{Fixpoint Completeness in Abstract Interpretation}\label{compl-sec}
Soundness in abstract interpretation (more in general, in static analysis) 
is a mandatory requirement
stating that no false negative can occur, that is, 
abstract fixpoint computations correctly (over-)approximate
the corresponding concrete semantics:  
if $f:C\ra C$ and $f^\sharp:A\ra A$ are the concrete and 
abstract monotonic transformers then \emph{fixpoint soundness} means that 
$\alpha(\lfp(f))\leq_A \lfp(f^\sharp)$ holds, so that 
a positive abstract check $\lfp(f^\sharp) \leq_A a'$ 
entails that $\gamma(a')$ concretely holds, \textit{i.e.}, $\lfp(f) \leq_C \gamma(a')$. 
Fixpoint soundness is usually obtained as a consequence of 
\emph{pointwise soundness}: if $f^\sharp$
is a pointwise correct 
approximation of $f$, \textit{i.e.}\ $\alpha f \dotted{\leq}_A f^\sharp \alpha$, then $\alpha(\lfp(f))\leq_A 
\lfp(f^\sharp)$ holds. 

While soundness is indispensable, 
completeness in abstract interpretation encodes an ideal
situation where no false positives (also called false alarms) 
arise: \emph{fixpoint completeness} means that 
$\alpha(\lfp(f))= \lfp(f^\sharp)$ holds, so that
$\lfp(f^\sharp) \not\leq_A a'$ entails
$\lfp(f) \not\leq_C \gamma(a')$. One can also consider a \emph{strong
fixpoint completeness} requiring that $\lfp(f)= \gamma(\lfp(f^\sharp))$, 
so that $\lfp(f^\sharp) \not\leq_A \alpha(c')$ entails
$\lfp(f) \not\leq_C c'$. However, it should be remarked that 
$\lfp(f)= \gamma(\lfp(f^\sharp))$ is much stronger 
than $\alpha(\lfp(f))= \lfp(f^\sharp)$ since it 
means that the concrete lfp  is 
precisely represented by the 
abstract lfp.

It is important to remark that 
if $f^\sharp$ is a pointwise correct 
approximation of $f$ and fixpoint completeness for 
$f^\sharp$ holds then since $\alpha(\lfp(f)) \leq_A \lfp(\alpha f \gamma)
\leq_A \lfp(f^\sharp)$ always holds, one also obtains that  
$\alpha(\lfp(f)) = \lfp(\alpha f \gamma)=\lfp(f^\sharp)$ holds, namely, the best
correct approximation  $\alpha f \gamma$ is fixpoint complete as well. 
This means that the possibility (or impossibility) of defining an approximate transformer 
$f^\sharp:A\ra A$ on $A$ which is 
fixpoint complete does not depend on the specific definition of 
$f^\sharp$ but is instead an intrisic  
\emph{property of the abstract domain} $A$ w.r.t.\ the concrete transformer 
$f$, as formalized by the
equation $\alpha(\lfp(f)) = \lfp(\alpha f \gamma)$. 
Moreover, fixpoint completeness 
is typically proved as a by-product of \emph{pointwise completeness} 
$\alpha f = f^\sharp \alpha$, and if $f^\sharp$ is 
pointwise complete
then it turns out that 
$f^\sharp = \alpha f \gamma$, that is, $f^\sharp$ actually is
the bca of $f$. This justifies why, without loss of generality,
we can consider 
fixpoint and pointwise completeness 
of bca's $\alpha f \gamma$ only, \textit{i.e.}, a property of 
abstract domains.

\subsection{Characterizing Fixpoint Completeness by Abstract Inductive Invariants}\label{sec-char}
We show that the abstract inductive invariant principle
is closely related to fixpoint completeness. More precisely, we provide
an answer to the following question: in the abstract inductive invariant principle
as stated by Lemma~\ref{char},  
can we replace $\lfp(\alpha f \gamma)$ with $\alpha(\lfp(f))$?
This question is settled by the following result. 

\begin{theorem}%
\label{theo-char}
Let $(C_{\leq_C},\alpha,\gamma,A_{\leq_A})$ be a GI.
\begin{enumerate}[{\rm (a)}]
\item
$\lfp(f) = \gamma(\lfp(\alpha f \gamma))$ iff
$\forall c'\in C. \big(\! \lfp(f) \leq_C c' \Lra \exists a\in A.\: f\gamma(a) \leq_C \gamma(a) \wedge \gamma(a)\leq_C c'\big)$;
\item
$\alpha(\lfp(f)) = \lfp(\alpha f \gamma)$ iff
$\forall a'\in A. \big(\!\lfp(f) \leq_C \gamma(a') \Lra \exists a\in A.\: f\gamma(a) \leq_C \gamma(a) \wedge \gamma(a)\leq_C \gamma(a')\!\big)$.
\end{enumerate}
\end{theorem}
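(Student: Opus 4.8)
The plan is to reduce each of the two biconditionals to a purely order-theoretic \emph{indirect equality} principle: in a poset $P$, two elements $x,y$ satisfy $x=y$ iff they determine the same principal up-set, i.e.\ $\forall z\in P.\,(x\leq z \Lra y\leq z)$. The role of Lemma~\ref{char} is precisely to rewrite the existential clauses on the right-hand sides of (a) and (b) into inequalities about $\lfp(\alpha f\gamma)$, after which each statement collapses to an instance of this principle. Since Lemma~\ref{char} already carries the analytical content, the remaining work is essentially bookkeeping.

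For part (a), I would first invoke Lemma~\ref{char}~(a) to rewrite, for each fixed $c'\in C$, the clause $\exists a\in A.\,f\gamma(a)\leq_C\gamma(a)\wedge\gamma(a)\leq_C c'$ as $\gamma(\lfp(\alpha f\gamma))\leq_C c'$. Thus the entire right-hand side of (a) becomes $\forall c'\in C.\big(\lfp(f)\leq_C c' \Lra \gamma(\lfp(\alpha f\gamma))\leq_C c'\big)$. Writing $L\ud\lfp(f)$ and $M\ud\gamma(\lfp(\alpha f\gamma))$, this says exactly that $L$ and $M$ have the same principal up-set in $C$. The forward implication is immediate from $L=M$. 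For the converse I would instantiate the outer quantifier at $c'=L$ to extract $M\leq_C L$ (from $L\leq_C L$) and at $c'=M$ to extract $L\leq_C M$ (from $M\leq_C M$), and conclude $L=M$ by antisymmetry of $C_{\leq_C}$. Alternatively, since $\lfp(f)\leq_C\gamma(\lfp(\alpha f\gamma))$ holds unconditionally by pointwise soundness ($\alpha(\lfp(f))\leq_A\lfp(\alpha f\gamma)$) together with monotonicity of $\gamma$ and extensivity of the closure $\gamma\alpha$, only the instance $c'=L$ is actually needed.

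For part (b) the recipe is the same, carried out in $A$, with one extra rewriting step. I would use Lemma~\ref{char}~(b) to turn, for each $a'\in A$, the existential clause into $\lfp(\alpha f\gamma)\leq_A a'$, and I would use the adjunction $\alpha(c)\leq_A a \Lra c\leq_C\gamma(a)$ to rewrite the hypothesis $\lfp(f)\leq_C\gamma(a')$ as $\alpha(\lfp(f))\leq_A a'$. The right-hand side of (b) then reads $\forall a'\in A.\big(\alpha(\lfp(f))\leq_A a' \Lra \lfp(\alpha f\gamma)\leq_A a'\big)$, which by the same indirect-equality argument in the poset $A_{\leq_A}$ (instantiating $a'$ at $\alpha(\lfp(f))$ and at $\lfp(\alpha f\gamma)$, then using antisymmetry of $\leq_A$) is equivalent to $\alpha(\lfp(f))=\lfp(\alpha f\gamma)$.

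I do not expect a genuine obstacle: the only care needed is in the bookkeeping. The indirect-equality step uses both directions of each inner biconditional and the freedom to instantiate the outer quantifier at the two candidate elements, so I must check these elements lie in the quantified domain, which they trivially do. The subtler point is in (b): it is essential to pass through the adjunction so that the comparison happens in $A$ rather than in $C$, since the two parts concern the genuinely distinct equalities $\lfp(f)=\gamma(\lfp(\alpha f\gamma))$ (strong fixpoint completeness) and $\alpha(\lfp(f))=\lfp(\alpha f\gamma)$ (fixpoint completeness); keeping the comparison in the correct domain is exactly what distinguishes the two statements.
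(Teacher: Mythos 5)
Your proof is correct and follows essentially the same route as the paper's: both arguments hinge on Lemma~\ref{char} to rewrite the existential clauses as inequalities involving $\lfp(\alpha f \gamma)$, and then instantiate the universal quantifier at the relevant fixpoint element(s). The only difference is cosmetic: where you obtain both inequalities via two instantiations plus antisymmetry (your ``indirect equality'' principle), the paper uses a single instantiation (at $c'=\lfp(f)$ in (a), at $a'=\alpha(\lfp(f))$ in (b)) together with the unconditional soundness inequality $\alpha(\lfp(f))\leq_A \lfp(\alpha f \gamma)$ --- precisely the alternative you yourself point out for part~(a).
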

\begin{proof} 

\noindent
{\rm (a)}
$(\Ra)$:  By Lemma~\ref{char}~(a). 

\noindent
$(\La)$: Since $\lfp(f) \leq_C \lfp(f)$ holds, we have that 
$\exists a\in A.\: f\gamma(a) \leq_C \gamma(a) \wedge \gamma(a)\leq_C \lfp(f)$. 
Thus, by Lemma~\ref{char}~(a), $\gamma(\lfp(\alpha f \gamma)) \leq_C \lfp(f)$ follows.
On the other hand, $\lfp(f)\leq  \gamma(\lfp(\alpha f \gamma))$ always holds
because the pointwise correctness of 
$\alpha f \gamma$ implies
$\alpha(\lfp(f))\leq_A  \lfp(\alpha f \gamma)$, hence, by GC, $\lfp(f)\leq  \gamma(\lfp(\alpha f \gamma))$ follows.

\noindent
{\rm (b)} $(\Ra)$: By Lemma~\ref{char}~(b) because $\lfp(\alpha f \gamma) \leq_A a' \Lra
\alpha(\lfp(f)) \leq
a'\Lra \lfp(f) \leq_C \gamma(a')$. 

\noindent
$(\La)$: We consider $a'\ud \alpha(\lfp(f))$, so that, 
$\lfp(f) \leq_C \gamma(a')$ holds and
by the equivalence of the hypothesis, 
$\exists a\in A.\: f\gamma(a) \leq_C \gamma(a) \wedge \gamma(a)\leq_C \gamma\alpha(\lfp(f))$ holds. 
This implies, by GI, that $\exists a\in A.\: \alpha f\gamma(a) \leq_C a \wedge a\leq_A \alpha(\lfp(f))$. 
By the inductive invariant principle \eqref{pii}, this implies that (actually, is equivalent to) $\lfp(\alpha f\gamma )\leq \alpha(\lfp(f))$. 
Furthermore, $\alpha(\lfp(f)) \leq_A \lfp(\alpha f\gamma )$ always holds, therefore proving that $\alpha(\lfp(f)) = \lfp(\alpha f \gamma)$.
\qed\end{proof}

\noindent
Theorem~\ref{theo-char}~(b) can be stated by means of ucos as follows:
if $\mu\in \uco(C)$ then $\mu(\lfp(f)) = \lfp(\mu f)$ iff
$\forall a'\in \mu. (\lfp(f) \leq_C a' \Lra \exists a\in \mu .\: fa \leq_C a \wedge a\leq_C a')$.

The above result can be read as follows. Since,
by the inductive invariant principle~\eqref{pii}, $\lfp(f)\leq_C c'$ iff 
there exists a (concrete) inductive invariant proving $c'$, 
it turns out that Theorem~\ref{theo-char}~(a) states that, for all $c'\in C$,  
the existence of an \emph{abstract} inductive invariant proving $c'$ is equivalent 
to the existence of \emph{any} inductive invariant proving $c'$ 
iff fixpoint completeness holds. In other terms, the (concrete) inductive invariant principle
is equivalent to the abstract inductive invariant principle iff fixpoint completeness holds.
This result is of independent interest in abstract interpretation, 
since it provides a new characterization of 
the key property of fixpoint completeness of abstract domains \cite{grs00}.

A further interesting characterization of fixpoint 
completeness is as follows.

\begin{lemma}\label{lemma-4}
$\alpha(\lfp(f)) = \lfp(\alpha f \gamma) \Lra \exists a\in A. f \gamma(a) \leq_C \gamma(a) \wedge \gamma(a) \leq_C \gamma \alpha (\lfp(f))$.
\end{lemma}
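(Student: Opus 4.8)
The plan is to read this lemma as the instance of Theorem~\ref{theo-char}~(b) obtained by fixing the abstract target to be $a'\ud \alpha(\lfp(f))$, and then to exploit the fact that $\gamma\alpha$ is extensive so that the concrete premise $\lfp(f) \leq_C \gamma\alpha(\lfp(f))$ holds unconditionally. I would therefore not redo the fixpoint arithmetic from scratch but instead isolate the two auxiliary facts already available in the excerpt: that $\alpha f \gamma$ is a pointwise correct approximation of $f$, whence $\alpha(\lfp(f)) \leq_A \lfp(\alpha f \gamma)$ always holds (as noted in Section~\ref{compl-sec}), and the GI adjunction equivalences $f\gamma(a) \leq_C \gamma(a) \Lra \alpha f \gamma(a) \leq_A a$ and $\gamma(a) \leq_C \gamma(a') \Lra a \leq_A a'$.

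For the forward direction $(\Ra)$, I would assume $\alpha(\lfp(f)) = \lfp(\alpha f \gamma)$ and invoke Theorem~\ref{theo-char}~(b), whose right-hand side is a universally quantified equivalence over $a' \in A$. Instantiating it at $a' = \alpha(\lfp(f))$, the left conjunct $\lfp(f) \leq_C \gamma\alpha(\lfp(f))$ is automatically true by extensivity of the closure $\gamma\alpha$, so the equivalence forces its right-hand side, namely $\exists a \in A.\, f\gamma(a) \leq_C \gamma(a) \wedge \gamma(a) \leq_C \gamma\alpha(\lfp(f))$, which is exactly the claimed statement.

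For the converse $(\La)$, I would take a witness $a \in A$ with $f\gamma(a) \leq_C \gamma(a)$ and $\gamma(a) \leq_C \gamma\alpha(\lfp(f))$, and translate both inequalities through the adjunction to obtain $\alpha f\gamma(a) \leq_A a$ and $a \leq_A \alpha(\lfp(f))$. Feeding these into the fixpoint induction principle~\eqref{pii} applied to the monotone map $\alpha f \gamma : A \ra A$ with target $\alpha(\lfp(f))$ yields $\lfp(\alpha f \gamma) \leq_A \alpha(\lfp(f))$; combining this with the always-valid reverse inequality $\alpha(\lfp(f)) \leq_A \lfp(\alpha f \gamma)$ gives the desired equality, i.e.\ fixpoint completeness.

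I do not anticipate a serious obstacle, since the lemma is essentially a repackaging of the $(\La)$ argument already embedded in the proof of Theorem~\ref{theo-char}~(b); the only points requiring care are to apply the adjunction in the correct direction when passing between $\leq_C$ on concrete elements and $\leq_A$ on their abstractions, and to recall that the reverse inequality $\alpha(\lfp(f)) \leq_A \lfp(\alpha f \gamma)$ is supplied for free by pointwise soundness rather than needing a separate argument.
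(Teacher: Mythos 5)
Your proposal is correct, but it does not follow the paper's route exactly. The paper proves the lemma by one self-contained chain of equivalences: by the GI adjunction, the existence of $a\in A$ with $f\gamma(a)\leq_C\gamma(a)$ and $\gamma(a)\leq_C\gamma\alpha(\lfp(f))$ is equivalent to $\exists a\in A.\:\alpha f\gamma(a)\leq_A a \wedge a\leq_A \alpha(\lfp(f))$; since the induction principle~\eqref{pii} is itself a biconditional, this is in turn equivalent to $\lfp(\alpha f\gamma)\leq_A\alpha(\lfp(f))$, which, in the presence of the always-valid soundness inequality $\alpha(\lfp(f))\leq_A\lfp(\alpha f\gamma)$, is equivalent to fixpoint completeness. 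Your $(\La)$ direction is precisely this chain read in one direction, with the same three ingredients (GI laws, \eqref{pii}, pointwise soundness). Where you genuinely diverge is the $(\Ra)$ direction: instead of re-reading the same chain backwards (which yields the explicit witness $a=\lfp(\alpha f\gamma)$), you instantiate Theorem~\ref{theo-char}~(b) at $a'=\alpha(\lfp(f))$ and discharge its premise $\lfp(f)\leq_C\gamma\alpha(\lfp(f))$ by extensivity of $\gamma\alpha$. This is legitimate and non-circular, since Theorem~\ref{theo-char} is established before the lemma, and it has the merit of exhibiting the lemma as a specialization of the theorem; the trade-off is that it leans on a heavier result where the paper's argument needs only \eqref{pii} and the GI adjunction, and it leaves the witness implicit, whereas the paper's biconditional chain produces it for free.
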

\begin{proof} 
\begin{align*}
\exists a\in A. f \gamma(a) \leq_C \gamma(a) \wedge \gamma(a) \leq_C \gamma \alpha (\lfp(f)) &\Lra \;\:\text{[by GI]}\\
\exists a\in A. \alpha f \gamma(a) \leq_A a \wedge a \leq_A \alpha (\lfp(f)) &\Lra \;\: \text{[by \eqref{pii} for $\alpha f \gamma$]}\\
\lfp(\alpha f \gamma) \leq_A \alpha (\lfp(f)) &\Lra 
\;\:\text{[as $\alpha(\lfp(f)) \leq_A \lfp(\alpha f\gamma )$]}\\
\alpha(\lfp(f)) = \lfp(\alpha f \gamma) &\tag*{\qed}
\end{align*}
\end{proof}

\noindent
As a consequence, 
fixpoint 
completeness for $f$ does not hold in $A$ iff the abstract property $\alpha(\lfp(f))\in A$ cannot be
proved by 
an abstract inductive 
invariant in $A$

\begin{examplebf}
Consider the finite chain $C\ud \{1 < 2 < 3\}$ and the monotonic 
concrete 
function $f:C\ra C$ defined by $f\ud \{1\mapsto 1;\, 2\mapsto 3;\, 3\mapsto 3\}$.

\noindent
Consider
the uco $\mu\ud \{2,3\}$, \textit{i.e.}, $\mu=\{1\mapsto 2;\, 2\mapsto 2;\, 3 \mapsto 3\}$.
Hence, $ \mu f =\{1\mapsto 2;\, 2\mapsto 3;\, 3\mapsto 3\}$ so that
fixpoint completeness does not hold because $\mu(\lfp(f))=\mu(1)=2 < 3=\lfp(\mu f)$. 
Thus, in accordance with Lemma~\ref{lemma-4}, 
it turns out that $\mu(\lfp(f))=2$ cannot be inductively proved in the abstraction $\mu$.
In fact, $f(2)\not \leq 2$, while $f(3)\leq 3$ but $3\not \leq \mu(\lfp(f))$.
 
\noindent
Consider
instead the uco $\mu\ud \{1,3\}$, \textit{i.e.}, $\mu=\{1\mapsto 1;\, 2\mapsto 3;\, 3 \mapsto 3\}$, so that $ \mu f =\{1\mapsto 1;\, 2\mapsto 3;\, 3\mapsto 3\}$. 
Here, $\mu(\lfp(f))=\mu(1)=1=\lfp(\mu f)$, therefore
fixpoint completeness holds.   
Thus, by the uco version of Theorem~\ref{theo-char}~(b), any valid abstract invariant of $f$ can be inductively proved: in fact, $1,3\in \mu$ are valid abstract invariants of $f$ and are both inductive.  
\qed
\end{examplebf}

\subsection{When Safety = Abstract Invariance?}\label{app-two}
Padon et al.~\cite[Section~9]{padon16} in their investigation on the decidability of inferring inductive invariants
state that ``\emph{Usually completeness for abstract interpretation means that the abstract domain is precise enough to prove all interesting safety properties, \textit{e.g.}, {\rm \cite{grs00}}. In our terms, this means that $\safe=\inv$, that is, that all safe programs have an inductive invariant expressible in the abstract domain.}'' 
As a by-product of the results in Section~\ref{sec-char}, 
we are able to give a 
formal justification and statement of this informal characterization of completeness.

Let $\cF \subseteq C\ra C$ be a class of monotonic functions, 
$\cS\subseteq C$ be some set of safety
properties and $\cA \subseteq C$ be an abstract domain of program properties. Let us define:
\begin{align*}
&\safe[\cF,\cS] \ud \{\tuple{f,s}\in \cF\times \cS \mid \lfp(f) \leq_C s\}\\
&\inv[\cF,\cS,\cA] \ud \{\tuple{f,s}\in \cF\times \cS \mid \exists 
a\in \cA.\: fa \leq_C a \wedge a\leq_C s\}
\end{align*}
so that in our model 
$\safe[\cF,\cS]$ and $\inv[\cF,\cS,\cA]$ play the role of, resp., ``safe programs'' and ``programs having 
an inductive invariant expressible in $\cA$''. 
As a consequence of Theorem~\ref{theo-char}, 
we derive the following characterization.

\begin{corollary}\label{coro19}
Assume that $\cA$ satisfies Assumption~\ref{assu} for
some GI $\tuple{C,\alpha,\gamma,\cA}$.
\begin{enumerate}[{\rm (a)}]
\item
Assume that $\cS\subseteq \cA$. Then,
 $\safe[\cF,\cS] =\inv[\cF,\cS,\cA]$ iff $\forall f\in \cF. \alpha(\lfp(f))=\lfp(\alpha f \gamma)$.
 \item
$\safe[\cF,\cS] =\inv[\cF,\cS,\cA]$ iff $\forall f\in \cF. \lfp(f)=\gamma(\lfp(\alpha f \gamma))$.
\end{enumerate}
\end{corollary}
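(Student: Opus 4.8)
The plan is to derive Corollary~\ref{coro19} directly from Theorem~\ref{theo-char} by unfolding the definitions of $\safe$ and $\inv$, matching part~(b) with Theorem~\ref{theo-char}~(a) and part~(a) with Theorem~\ref{theo-char}~(b). The preliminary observation is that $\inv[\cF,\cS,\cA]\subseteq \safe[\cF,\cS]$ always holds: if $\langle f,s\rangle\in \inv[\cF,\cS,\cA]$ there is $a\in\cA$ with $fa\leq_C a$ and $a\leq_C s$, and the prefixpoint half of the inductive invariant principle~\eqref{pii} gives $\lfp(f)\leq_C a\leq_C s$, i.e.\ $\langle f,s\rangle\in\safe[\cF,\cS]$. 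Hence $\safe[\cF,\cS]=\inv[\cF,\cS,\cA]$ reduces to the reverse inclusion, that is, to the family of biconditionals indexed by $\langle f,s\rangle\in\cF\times\cS$,
\[
\lfp(f)\leq_C s \;\Lra\; \exists a\in\cA.\: fa\leq_C a \wedge a\leq_C s ,
\]
where each $a\in\cA$ is identified with $\gamma(a)$ for the corresponding abstract element.

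For part~(b) I would let the safety properties range over all of $C$. Then the displayed biconditional, quantified over every $c'\in C$, is \emph{verbatim} the right-hand side of Theorem~\ref{theo-char}~(a) for $f$. Consequently the set equality holds iff, for every $f\in\cF$, the equation $\lfp(f)=\gamma(\lfp(\alpha f \gamma))$ of Theorem~\ref{theo-char}~(a) holds, which is exactly the claim.

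For part~(a), under $\cS\subseteq\cA$ I would take the safety properties to range over $\cA=\gamma(A)$, so that a safety property has the form $s=\gamma(a')$ with $a'\in A$. Rewriting the biconditional in this form and using the GI identity $a\leq_A a'\Lra\gamma(a)\leq_C\gamma(a')$, it becomes precisely the right-hand side of Theorem~\ref{theo-char}~(b) quantified over $a'\in A$. Hence the set equality is equivalent to $\alpha(\lfp(f))=\lfp(\alpha f \gamma)$ for all $f\in\cF$.

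The point that needs care, and the place where the two directions of each ``iff'' behave differently, is the \emph{range} of $\cS$. The ($\La$) directions (completeness $\Ra$ $\safe=\inv$) survive shrinking $\cS$, since they only restrict the biconditional to fewer properties. The ($\Ra$) directions instead need $\cS$ to be rich enough to expose any failure of completeness: for part~(a) one instantiates the hypothesis at the witness $s\ud\gamma\alpha(\lfp(f))\in\cA$ and invokes Lemma~\ref{lemma-4}, while for part~(b) one uses $s\ud\lfp(f)\in C$, and in each case the witness must lie in $\cS$, forcing $\cS$ to be the full admissible set ($\cA$ resp.\ $C$). Aligning the concrete/abstract witness of Lemma~\ref{lemma-4} and Theorem~\ref{theo-char} with the admissible range of $\cS$ is therefore the crux; once that alignment is fixed, everything else is the mechanical unfolding above together with the always-valid inclusion $\inv\subseteq\safe$.
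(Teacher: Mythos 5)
Your proposal is correct and takes essentially the same route as the paper: unfold $\safe$ and $\inv$ and invoke Theorem~\ref{theo-char}, and your matching --- Corollary~\ref{coro19}~(a) (fixpoint completeness $\alpha(\lfp(f))=\lfp(\alpha f \gamma)$) to Theorem~\ref{theo-char}~(b), and Corollary~\ref{coro19}~(b) (strong completeness $\lfp(f)=\gamma(\lfp(\alpha f \gamma))$) to Theorem~\ref{theo-char}~(a) --- is the right one, whereas the paper's one-line proof cites the two theorem items in the opposite, swapped order (evidently a typo). Your closing caveat about the range of $\cS$ is also well taken: the $(\Ra)$ directions do require $\cS$ to contain the witnesses $\gamma\alpha(\lfp(f))$ (resp.\ $\lfp(f)$) for each $f\in\cF$, an assumption the paper's terse proof leaves implicit.
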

\begin{proof} 
(a) follows by Theorem~\ref{theo-char}~(a), since $\cS\subseteq \cA$ is assumed to hold.
(b) follows by Theorem~\ref{theo-char}~(b).
\qed
\end{proof}

Corollary~\ref{coro19} therefore provides a precise equivalence
of the $\safe =^? \inv$ problem, as stated by Padon et al.~\cite{padon16}, 
with fixpoint completeness (strong fixpoint completeness, in case (b)) in abstract interpretation.

\section{Abstract Inductive Invariants of Programs}

We consider transition systems as represented by a finite control flow graph (CFG) of an 
imperative program.
A program is a tuple $\cP=\tuple{Q,n,\bV,\Tau,\sra}$ where $Q$ is 
a finite set of control nodes (or program points),
$n\in \bN$ is the number of program variables
of type $\bV$ (\textit{e.g.}, $\bV=\bZ,\bQ,\bR$), 
$\Tau$ is a finite set of (possibly nondeterministic) 
transfer functions of type $\bV^n\ra \wp(\bV^n)$,\:
$\mathrel{\sra}\;\subseteq Q\times\Tau\times Q$ 
is a (possibly nondeterministic) 
control flow relation, 
where $q\rat{t}q'$ denotes a flow transition with transfer function $t\in \Tau$. A program $\cP$ therefore defines a transition system $\cT_\cP=\tuple{\Sigma,\tau}$ where $\Sigma\ud Q\times \bV^n$ is the set of states and 
the transition relation $\tau\subseteq \Sigma \times \Sigma$ is defined by
$\tuple{(q,\vec{v}),(q',\vec{v}')}\in \tau$ $\diff$ 
$\exists t\in \Tau.\: q\rat{t}q' \wedge \vec{v}'\in t(\vec{v})$. 
The transfer functions in $\Tau$ include
imperative assignments and Boolean guards: 
if $b\in \wp(\bV^n)$ is a 
deterministic Boolean predicate (such as $x_1+2x_2 -1=0$) 
then the corresponding transfer function $t_b:\bV^n\ra \wp(\bV^n)$ 
is $t_b(\vec{v})\ud \textbf{if }\vec{v}\in b \textbf{ then } \{\vec{v}\}
\textbf{ else } \varnothing$.  
Examples of transfer functions include: affine, polynomial, nondeterministic assignments and
affine equalities guards.

The next value transformer $\post_{\tuple{q,q'}}:\wp(\bV^n)\ra \wp(\bV^n)$ for 
a pair $\tuple{q, q'}\in Q\times Q$ of control nodes is $\post_{\tuple{q,q'}}(X)\ud 
\cup \{t(X) \in \wp(\bV^n) \mid \exists t\in \Tau.q\rat{t}q'\}
$. 
Hence, if, for all $t\in \Tau$, 
$q \not\rat{t} q'$ then $\post_{\tuple{q,q'}}=\lambda X.\varnothing$.
The complete lattice
${\tuple{\wp(\Sigma),\subseteq}}$ of sets of states 
can be equivalently represented by 
the $Q$-indexed product lattice ${\tuple{\wp(\bV^n)^{|Q|},\dotted{\subseteq}}}$.
Hence, the successor transformer 
$\post_\cP:\wp(\bV^n)^{|Q|}\ra \wp(\bV^n)^{|Q|}$ 
and 
the set 
of reachable states of a program $\cP$ from  initial states in
$\Sigma_0\in \wp(\bV^n)^{|Q|}$
are defined as follows: 
\begin{align*}
&\post_\cP(\tuple{X_q}_{q\in Q}) \ud 
\tuple{\cup_{q\in Q} \post_{\tuple{q,q'}}(X_q)}_{q'\in Q}
\\ 
&\Reach[\cP,\Sigma_0]\ud \lfp(\lambda 
\vec{X}. \Sigma_0 \cup \post_\cP(\vec{X})) \in \wp(\bV^n)^{|Q|}
\end{align*}
For all control nodes $q\in Q$ and vector $\vec{X}\in \wp(\bV^n)^{|Q|}$, 
we will also use $\pi_q(\vec{X})$ and $\vec{X}_q$ to denote the $q$-indexed component of $\vec{X}$, \textit{e.g.}, $\Reach[\cP,\Sigma_0]_q\in \wp(\bV^n)$ will
be the set of reachable values at control node $q$.

We are interested in decidability and synthesis of abstract inductive 
invariants ranging in an abstract domain $A$ as specified
by a GI $(\wp(\bV^n)_\subseteq,\alpha,\gamma,A_{\leq_A})$ parametric on $n\in \bN$. 
By Corollary~\ref{coro-three}, for a given class $\cC$ of programs,  
a class $\Init$
of sets of initial states and a class $\Safe$ of sets of safety properties, 
the Monniaux problem~\eqref{ainv} 
is decidable iff for all $\cP=\tuple{Q,n,\bV,\Tau,\sra}\in \cC$,
$\Sigma_0\in \Init$ and $P\in \Safe$,  
\begin{equation}\label{kildall-q}
\!\!\!\!\!\dot{\gamma}(\lfp(\lambda \vec{a}\in A^{|Q|}\!. \:
\dot{\alpha}(\Sigma_0) \,\dot{\vee}_{A}\,  
\dot{\alpha} (\post_\cP (\dot{\gamma}(\vec{a}))))) \dotted{\subseteq}^? \!P 
\end{equation}
is decidable. Moreover, Corollary~\ref{coro-algo} provides 
an abstract inductive invariant
synthesis algorithm $\AInv$ for safety properties represented by $A$
(\textit{i.e.}, $P\in \gamma(A)$)
when 
$A$, $\cC$, $\Init$ and $\Safe$ satisfy the hypotheses of 
Corollary~\ref{coro-algo}.

In the following, we consider 
the decidability/synthesis  problem~\eqref{ainv} for Kildall's 
constant propagation  \cite{kildall73}
and Karr's affine equalities \cite{karr76} abstract domains.

\subsection{Kildall's Constant Propagation Domain}\label{const-sec}

Kildall's constant prop\-a\-ga\-tion \cite{kildall73} 
is a well-known and simple program analysis widely used in compiler optimization for
detecting whether a variable at some program point always stores a constant value for all possible program 
executions. Constant propagation relies on the abstract domain $\Const$, which, for simplicity, 
is here given for program variables assuming integer values in $\bZ$ ($\bQ$ and $\bR$ would be analogous).
For a single variable $\Const \ud \tuple{\bZ \cup \{\bot,\top\},\leq}$,
where $\leq$ is the standard flat partial order: 
for any $a\in \bZ \cup \{\bot,\top\}$, $\bot \leq a\leq \top$ (and $a\leq a$), 
which makes $\Const$ an infinite complete lattice with height~2. $\Const$ is straightforwardly specified by a GI $(\wp(\bZ)_\subseteq,\alpha_{\C},\gamma_{\C},\Const_\leq)$ where:
\begin{align*}
\alpha_{\C}(X)\ud 
\begin{cases}
\bot & \text{if }X=\varnothing\\
z & \text{if }X=\{z\}\\
\top & \text{otherwise}
\end{cases}
\qquad\qquad 
\gamma_{\C}(a) \ud 
\begin{cases}
\varnothing & \text{if }a=\bot\\
\{a\} & \text{if }a\in \bZ\\
\bZ & \text{if }a=\top 
\end{cases}
\end{align*}
For $n\geq 1$ variables, the 
product constant domain is
$\Const_n \ud (\Const_{\bot})^n \cup \{\bot\}$, where  
$\Const_{\bot}\ud \Const\smallsetminus\{\bot\}$, so as to have a unique
bottom element representing the empty set, 
while $\alpha_{\Const}:\wp(\bZ^n) \ra \Const$ and $\gamma_{\Const}:\Const\ra \wp(\bZ^n)$ 
are defined as pointwise lifting of, resp., $\alpha_{\C}$ and $\gamma_{\C}$.
$\tuple{\Const_n,\dot{\leq}}$ is a complete lattice of finite height $2n$, where 
lub and glb are defined pointwise. Finally, 
for a finite set of control nodes $Q$, $\dotted{\alpha}_{\Const}:
\wp(\bZ^n)^{|Q|} \ra \Const^{|Q|}$ and 
$\dotted{\gamma}_{\Const}:\Const^{|Q|}\ra \wp(\bZ^n)^{|Q|}$ 
are the $Q$-indexed pointwise lifting of, resp., $\alpha_{\Const}$ 
and $\gamma_{\Const}$.

It is known since \cite{hecht,reif} that the constant propagation problem 
is undecidable, meaning that for any program $\cP$ with $n$ variables, set $\Sigma_0$ of initial states,
node $q$ of $\cP$,  vector of constants (or $\top$) $\vec{k}\in \Const_n\smallsetminus\{\bot\}$, 
the problem $\alpha_{\Const}(\Reach[\cP,\Sigma_0]_q) \leq^? \vec{k}$ 
is undecidable.  
This undecidability is obtained by a simple reduction from
the undecidable Post correspondence problem 
and holds even if the interpretation of program branches is neglected, that is, in CFGs with unguarded
nondeterministic branching. 

By Corollary~\ref{coro-three}, 
the existence of abstract inductive 
invariants in $\Const$
for a given class $\cC$ of programs, 
$\Init$ of sets of initial states and $\Safe$ of sets of safety properties, 
is a decidable problem iff for all $\cP\in \cC$ with $n$ variables 
and control nodes in $Q$, 
for all $\Sigma_0\in \Init_{\cP}\subseteq \wp(\bV^n)^{|Q|}$ and for all $P\in \Safe_{\cP}\subseteq \wp(\bV^n)^{|Q|}$,  
\begin{equation}\label{kildall-q}
\!\!\!\!\!\dot{\gamma}_{\Const}(\lfp(\lambda \vec{a}\in \Const_n^{|Q|}\!. 
\dot{\alpha}_{\Const}(\Sigma_0) \dot{\vee}_{\Const_n}  \!
\dot{\alpha}_{\Const} (\post_\cP (\dot{\gamma}_{\Const}(\vec{a}))))) \dotted{\subseteq}^? \!P 
\end{equation}
is decidable.
We observe that:
\begin{enumerate}[{\rm (i)}]
\item if $P$ is a recursive set and $\vec{a}\in \Const_n^{|Q|}$ then 
$\dot{\gamma}_{\Const}(\vec{a}) \mathrel{\dot{\subseteq}^?} P$ is clearly decidable, 
because for all $q\in Q$, $i\in [1,n]$ and 
$k\in \bZ$, $k\in^? \pi_i(\pi_q(P))$ is decidable (and any $\dot{\gamma}_{\Const}(\vec{a}')$ is trivially recursive);  
\item if $\Sigma_0$ is a recursively enumerable (r.e.) set then 
$\dot{\alpha}_{\Const}(\Sigma_0)$ is clearly computable, because 
since $\Sigma_0$ is r.e., for all $q\in Q$ and $i\in [1,n]$, 
a (double) projection $\pi_i(\pi_q(\Sigma_0))$ is r.e., so that an algorithm enumerating
$\pi_i(\pi_q(\Sigma_0))$ allows us 
to compute its abstraction $\alpha_{\C}$ in $\Const$;
\item the binary lub $\vec{a} \vee_{\Const_n}  \vec{a}'$ in $\Const_n$ is clearly (pointwise) computable, thus, in turn, the $Q$-indexed lub 
$\dot{\vee}_{\Const_n}$ is computable.
\end{enumerate}

Therefore, since $\Const_n^{|Q|}$ has finite height,
a sufficient condition for the decidability of the problem
\eqref{kildall-q} is that the best correct approximation 
$\dot{\alpha}_{\Const} \circ \post_\cP \circ\: \dot{\gamma}_{\Const}: \Const_n^{|Q|} \ra 
\Const_n^{|Q|}$ is computable, where for all 
$\vec{a}\in \Const_n^{|Q|}$: %
\vspace*{-5pt}
\begin{align} \nonumber
\dot{\alpha}_{\Const} (\post_\cP(\dot{\gamma}_{\Const}(\vec{a}))) &= \nonumber\\
\tuple{\vee_{\Const}\{ \alpha_{\Const}(\post_{q,q'}(\pi_q(\dot{\gamma}_{\Const}(\vec{a})))) \mid q\in Q\}}_{q'\in Q} &= \nonumber\\
\tuple{\vee_{\Const}\{ \alpha_{\Const}(t(\pi_q(\dot{\gamma}_{\Const}(\vec{a})))) \mid \exists q\in Q, \exists t\in \Tau. q\rat{t} q'\}}_{q'\in Q}& \label{eq-const}
\end{align}
Because in \eqref{eq-const} we have a finite lub, it is enough that 
for all the transfer functions $t\in \Tau$ and $a\in\Const_n$, 
the bca $\alpha_{\Const}(t(\gamma_{\Const}(a)))$ is computable. 
It turns out that the best correct approximations in $\Const$ of 
affine assignments and linear Boolean guards are computable. Some simple
cases are described,
\textit{e.g.}, in \cite[Section~4.3]{mine17}, in the following we 
provide the algorithm of the bca's in full generality and detail.   

Let $\sum_{i=1}^n m_i x_i + b \in \LExp_n$ be a linear expression 
for $n$ integer variables ranging in $\Var_n\ud \{x_1,...,x_n\}$, where 
$\tuple{m_i}_{i=1}^n \in  \bZ^n$ is a vector of 
integer coefficients and $b\in \bZ$, and let $\grasse{\sum_{i=1}^n m_i x_i + b}:\wp(\bZ^n)\ra \wp(\bZ)$ denote its collecting semantics, that is,
$\grasse{\sum_{i=1}^n m_i x_i + b}X \ud 
\{\sum_{i=1}^n m_i v_i + b\in \bZ \mid \vec{v}\in X\}$. 
Then, we consider the transfer functions given by 
a single affine assignment
$t_a^j \dequiv x_j:=\sum_{i=1}^n m_i x_i + b$ and a linear Boolean guard 
$t_b \dequiv \sum_{i=1}^n m_i x_i + b \bowtie 0$ with $\bowtie \: \in \{=, \neq, <, \leq, >, \geq\}$, where for all $Y\in \wp(\bZ^n)$:
\begin{align*}
&t_a^j(Y) \ud  \{ \vec{y}[y_j/v_j] \in \bZ^n \mid \vec{y}\in Y, v_j \in 
\grasse{\textstyle{\sum}_{i=1}^n m_i x_i + b}\{\vec{y}\}\},\\
&t_b(Y) \ud \{ \vec{y} \in Y \mid 
\grasse{\textstyle{\sum}_{i=1}^n m_i x_i + b}\{\vec{y}\}\bowtie 0\}.
\end{align*}
These transfer functions can be easily extended to
include parallel affine assignments of the shape 
$\vec{x}:= M\vec{x} + \vec{b}$, where $M\in \bZ^{n\times n}$ is a 
$n\times n$ integer 
matrix and $\vec{b}\in \bZ^n$, which simply performs
$n$ parallel single affine assignments, 
and conjunctive (disjunctive) linear Boolean guards $ M\vec{x} + \vec{b} \bowtie 0$, 
which holds iff, for all (there exists) $j\in [1,n]$, 
$\sum_{i=1}^n M_{ji} x_i + b_j \bowtie 0$ holds. 

It turns out that the bca of $t_a^j$ and $t_b$ in $\Const_n$ 
are both computable. 
Let us first observe that if $\vec{a}\in \Const_n\smallsetminus \{\bot\}$ 
then $\grasse{\sum_{i=1}^n m_i x_i + b}(\gamma_{\Const}(\vec{a})) 
=\{k\}$ for some $k\in \bZ$
iff for all $l\in [1,n]$ either $m_l=0$ or  
$a_l\in \bZ$. Thus, for all $a\in \Const_n$, 
the problem $\exists k\in^{?} \bZ.\: \grasse{\sum_{i=1}^n m_i x_i + b}(\gamma_{\Const}(a)) 
=\{k\}$ is decidable, and in a positive case the constant $k$ is computable.
Moreover, if 
$\grasse{\sum_{i=1}^n m_i x_i + b}(\gamma_{\Const}(\vec{a})) 
\neq \{k\}$ then necessarily $\grasse{\sum_{i=1}^n m_i x_i + b}(\gamma_{\Const}(\vec{a}))=\bZ$.  
As a consequence of these observations, 
we derive the following definition:
for all $a\in \Const_n$,
\vspace*{-3pt}
\begin{align*}
\alpha_{\Const}(t_a^j(\gamma_{\Const}(a))) = 
\begin{cases}
\bot& \text{if } a=\bot\\
\vec{a}[a_j/k] & \text{if } a=\vec{a},\,
\grasse{\sum_{i=1}^n m_i x_i + b}(\gamma_{\Const}(\vec{a})) =\{k\}\\
\vec{a}[a_j/\top]       & \text{if } a=\vec{a},\,
\grasse{\sum_{i=1}^n m_i x_i + b}(\gamma_{\Const}(\vec{a})) =\bZ        
\end{cases}
\vspace*{-3pt}
\end{align*}
which shows that $\alpha_{\Const}(t_a^j(\gamma_{\Const}(a)))$ is computable and, in turn, bca's of parallel affine assignments are computable. 
The proof of computability of bca's of linear Boolean guards $t_b$ follows the same lines. 
We distinguish two cases: 
$\bowtie \;\in\! \{{\neq,}\: {<,}\: {\leq,}\: {>,}\: {\geq 
\}}$, whose transfer function is denoted by $t_b^{\bowtie}$, and
$\bowtie\;\in \{ = \}$, \textit{i.e.}\ affine equalities, 
denoted by $t_b^=$.
The corresponding algorithms are as follows: for all $a\in \Const_n$:
\vspace*{-3pt}
\begin{align*}
\alpha_{\Const}(t_b^{\bowtie}(\gamma_{\Const}(a))) =
\begin{cases}
\bot& \text{if } a=\bot\\
\vec{a} & \text{if } a=\vec{a},\, 
\grasse{\sum_{i=1}^n m_i x_i + b}(\gamma_{\Const}(\vec{a})) =\{k\}, k\bowtie 0\\
\bot & \text{if } a=\vec{a},\, 
\grasse{\sum_{i=1}^n m_i x_i + b}(\gamma_{\Const}(\vec{a})) =\{k\}, k\not \bowtie 0\\
\vec{a} & \text{if } a=\vec{a},\, 
\grasse{\sum_{i=1}^n m_i x_i + b}(\gamma_{\Const}(\vec{a})) =\bZ
\end{cases}
\end{align*}

\vspace*{-20pt}
\begin{multline*}
\alpha_{\Const}(t_b^=(\gamma_{\Const}(a))) =\\
\begin{cases}
\bot& \text{if } a=\bot\\
a & \text{if } a\neq \bot,\, 
\grasse{\sum_{i=1}^n m_i x_i + b}(\gamma_{\Const}(a)) =\{k\}, k= 0\\
\bot & \text{if } a\neq \bot,\, 
\grasse{\sum_{i=1}^n m_i x_i + b}(\gamma_{\Const}(a)) =\{k\}, k\neq 0\\
\vec{a}[a_j/k] & \text{if } a\neq \bot,\, 
\exists j\in[1,n]. m_j\neq 0,\, a_j=\top,\,  
\grasse{\sum_{i\neq j} m_i x_i + b}(\gamma_{\Const}(a)) =\{k\}\\
a & \text{otherwise}        
\end{cases}
\vspace*{-3pt}
\end{multline*}

Summing up, by Corollaries~\ref{coro-three} and \ref{coro-algo} 
we have shown the following result for the class
$\cC_{\Const}$ of 
nondeterministic programs with (possibly parallel) affine assignments
and (conjunctive or disjunctive) linear Boolean guards. 

\begin{theorem}[\textbf{Decidability and Synthesis of Inductive Invariants in $\Const$}]\label{deci-const}
The Monniaux problem \eqref{kildall-q} on $\Const$ for programs in
$\cC_{\Const}$,
r.e.\ sets of initial states and recursive sets of state
properties is decidable. Moreover, 
the algorithm $\AInv$ of Corollary~\ref{coro-algo} instantiated to
$\post_\cP$ for programs $\cP\in\cC_{\Const}$, r.e.\ sets of initial states and safety properties 
$P\in \dot{\gamma}_{\Const}(\Const^{|Q_{\cP}|})$ synthesizes the least
inductive invariant of $\cP$ in $\Const$, when this exists.
\end{theorem}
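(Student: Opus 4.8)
The plan is to obtain both claims as direct instantiations of the general machinery developed above: Corollary~\ref{coro-three} for the decidability part and Corollary~\ref{coro-algo} for the synthesis part, feeding them the computability facts recorded in observations~(i)--(iii) together with the explicit best correct approximations of the transfer functions $t_a^j$, $t_b^{\bowtie}$ and $t_b^{=}$ derived immediately before the statement.

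For decidability I would first invoke Corollary~\ref{coro-three}, which reduces the Monniaux problem~\eqref{kildall-q} on $\Const$ to deciding $\dot{\gamma}_{\Const}(\lfp(F)) \dotted{\subseteq}^? P$, where $F \ud \lambda \vec{a}\in \Const_n^{|Q|}.\: \dot{\alpha}_{\Const}(\Sigma_0) \dot{\vee}_{\Const_n} \dot{\alpha}_{\Const}(\post_\cP(\dot{\gamma}_{\Const}(\vec{a})))$ is the best correct approximation in $\Const_n^{|Q|}$ of $\lambda \vec{X}.\: \Sigma_0 \cup \post_\cP(\vec{X})$. The crux is that $\lfp(F)$ is effectively computable. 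Since $\tuple{\Const_n^{|Q|},\dot{\leq}}$ is a complete lattice of finite height $2n|Q|$, it satisfies the ACC, so the Kleene chain $\bot \dot{\leq} F(\bot) \dot{\leq} F^2(\bot) \dot{\leq} \cdots$ (ascending because $F$ is monotone) stabilizes after finitely many steps at $\lfp(F)$, and its stabilization is detected using the decidable order $\dot{\leq}$. Each iterate is computable because $F$ is computable: $\dot{\alpha}_{\Const}(\Sigma_0)$ is computable by observation~(ii) (as $\Sigma_0$ is r.e.), $\dot{\vee}_{\Const_n}$ is computable by observation~(iii), and the bca $\dot{\alpha}_{\Const}\circ \post_\cP\circ \dot{\gamma}_{\Const}$ decomposes through~\eqref{eq-const} into a finite lub of the bca's of the single transfer functions $t\in \Tau$, each computable by the algorithms obtained above for $\alpha_{\Const}(t_a^j(\gamma_{\Const}(a)))$, $\alpha_{\Const}(t_b^{\bowtie}(\gamma_{\Const}(a)))$ and $\alpha_{\Const}(t_b^{=}(\gamma_{\Const}(a)))$ (and their parallel and conjunctive/disjunctive liftings). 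Once $\lfp(F)\in \Const_n^{|Q|}$ has been computed, the final test $\dot{\gamma}_{\Const}(\lfp(F)) \dotted{\subseteq}^? P$ is decidable by observation~(i), because $P$ is recursive and every $\dot{\gamma}_{\Const}(\vec{a})$ is a recursive set; this settles decidability.

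For synthesis I would instantiate Corollary~\ref{coro-algo} with $f \ud \post_\cP$, $c \ud \Sigma_0$ and $a'\in \Const_n^{|Q|}$ chosen so that $P = \dot{\gamma}_{\Const}(a')$, which is possible exactly because the hypothesis restricts $P$ to the image $\dot{\gamma}_{\Const}(\Const_n^{|Q|})$. It then remains to check the hypotheses of Corollary~\ref{coro-algo}, all of which hold: $\dot{\vee}_{\Const_n}$ and the bca $\dot{\alpha}_{\Const}\circ \post_\cP\circ \dot{\gamma}_{\Const}$ are computable (as above), $\dot{\leq}$ is decidable, $\Const_n^{|Q|}$ is an ACC CPO since it is a complete lattice of finite height, and $\dot{\alpha}_{\Const}(\Sigma_0)$ is computable by observation~(ii). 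Corollary~\ref{coro-algo} then guarantees that $\AInv(\post_\cP,\Const_n^{|Q|},\Sigma_0,a')$ terminates and returns the least abstract inductive invariant for $\post_\cP$ and $\tuple{\Sigma_0,\dot{\gamma}_{\Const}(a')} = \tuple{\Sigma_0,P}$ whenever one exists, which is precisely the asserted least inductive invariant of $\cP$ in $\Const$.

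The genuine difficulty has already been discharged in the paragraphs preceding the statement, namely the computability of the best correct approximations $\alpha_{\Const}(t(\gamma_{\Const}(a)))$ of affine assignments and linear guards, so the theorem itself is essentially an assembly. Within this assembly the one point I would treat with care is the finite-height property of $\Const_n^{|Q|}$, which I expect to be the linchpin: it simultaneously forces termination of the Kleene iteration computing $\lfp(F)$ and supplies the ACC-CPO hypothesis required by Corollary~\ref{coro-algo}, so that both decidability and terminating synthesis follow with no recourse to widening or any other convergence acceleration.
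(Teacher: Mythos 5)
Your proposal is correct and follows essentially the same route as the paper: the theorem is obtained by assembling Corollary~\ref{coro-three} (decidability, via Kleene iteration of the computable bca on the finite-height lattice $\Const_n^{|Q|}$ together with observations~(i)--(iii)) and Corollary~\ref{coro-algo} (synthesis, whose hypotheses you verify exactly as the paper does), with the real work being the computability of the bca's of the transfer functions established just before the statement.
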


This result could be easily extended to polynomial
assignments and Boolean guards. We do not discuss the details of these
extensions since 
our main goal 
here was to illustrate on the simple example of 
$\Const$
how the abstract inductive invariant principle
can be applied to derive decidability  results and synthesis algorithms 
for abstract inductive invariants.

\begin{figure}[t]
\centering
\begin{tikzpicture}[shorten >=0pt,node distance=1.05cm,on grid,>=stealth',every state/.style={inner sep=0pt, minimum size=5mm, draw=blue!50,very thick,fill=blue!10}]
\node[state]         (q1)	{{\scriptsize $q_1$}}; 
\node[state]         (q2) [below=of q1] {{\scriptsize $q_2$}};
\node[state]         (q4) [below right=of q2] {{\scriptsize $q_4$}};
\node[state]         (q3) [below left=of q4] {{\scriptsize $q_3$}};

\path[->] (q1) edge node[align=right,right,midway] {{\scriptsize $x_1:=0;$}\\[-3pt] {\scriptsize $x_2:=2;$}} (q2)
          (q2) edge node[left,align=right,midway] {{\scriptsize $x_1:=x_1+2x_2;$}
          \\[-3pt] {\scriptsize $x_2:=x_2-1;$}} (q3)
          (q2) edge node[above=3pt,right] {{\scriptsize \;$x_1\geq 9$}} (q4);

\draw[->] (q3) .. controls +(left:3.2cm) and +(left:3.2cm) .. 
node[left,align=right,midway] {{\scriptsize $x_1:=x_1-x_2;$}\\[-3pt] {\scriptsize $x_2:=x_2+1;$}} (q2);
\end{tikzpicture}
\caption{A CFG representing a nondeterministic program $\cP$.}\label{fig2}
\end{figure}

\begin{examplebf}\label{ex-one}
Let us consider the program $\cP$ represented in Fig.~\ref{fig2} with
$\Sigma_0={\{q_1\}\times \bZ^2}$  and the valid property 
$P^\sharp = \ok{\big\langle \tuple{q_1,(\top,\top)}, \!
\tuple{q_2,(\top,2)}, \! \tuple{q_3,(\top,\top)},$ $\tuple{q_4,(\top,\top)}\big\rangle} \in \ok{\Const_2^{|Q|}}$ representing that the variable $x_2$ is constantly equal to $2$ at the program point $q_2$.
It is easy to check that the algorithm $\AInv$ of Corollary~\ref{coro-algo}
yields the following sequence of abstract values $J^{k+1} =  
\dot{\alpha}_{\Const}(\post_{\cP}(\dot{\gamma}_{\Const}(J^k)))
\in \Const_2^{|Q|}$:
\begin{align*}
&J^0=\dot{\alpha}_{\Const}(\Sigma_0) = \big\langle\tuple{q_1,(\top,\top)}, 
\tuple{q_2,(\bot,\bot)},\tuple{q_3,(\bot,\bot)},\tuple{q_4,(\bot,\bot)}\big\rangle\\
&J^1=\big\langle\tuple{q_1,(\top,\top)}, 
\tuple{q_2,(0,2)},\tuple{q_3,(\bot,\bot)},\tuple{q_4,(\bot,\bot)}\big\rangle\\
&J^2=\big\langle\tuple{q_1,(\top,\top)}, 
\tuple{q_2,(0,2)},\tuple{q_3,(4,1)},\tuple{q_4,(\bot,\bot)}\big\rangle\\
&J^3=\big\langle\tuple{q_1,(\top,\top)}, 
\tuple{q_2,(\top,2)},\tuple{q_3,(4,1)},\tuple{q_4,(\bot,\bot)}\big\rangle\\
&J^4=\big\langle\tuple{q_1,(\top,\top)}, 
\tuple{q_2,(\top,2)},\tuple{q_3,(\top,1)},\tuple{q_4,(\top,2)}\big\rangle =\\ 
&\quad\;\! =
\dot{\alpha}_{\Const}(\post_{\cP}(\dot{\gamma}_{\Const}(J^4))) \dot{\leq}_{\Const} P^\sharp
\end{align*}
Thus, the output $J^4$ is the least (\textit{i.e.}\ strongest) 
inductive invariant in $\Const$ which allows us to prove $P^\sharp$.
\qed
\end{examplebf}

\paragraph{\textbf{Relationship with Completeness.}}
It is easy to check that  
all the transfer functions $t_a$ of affine assignments 
are pointwise complete, namely, $\alpha_{\Const} \circ t_a =
\alpha_{\Const} \circ t_a \circ \gamma_{\Const}\circ \alpha_{\Const}$ holds. 
Consequently, as recalled in Section~\ref{compl-sec}, from pointwise
completeness one obtains fixpoint completeness, that is, 
for all unguarded programs $\cP$ with 
(parallel) affine assignments, 
\[ \dot{\alpha}_{\Const}(\lfp(\lambda 
\vec{X}. \Sigma_0 \cup \post_\cP(\vec{X}))) = 
\lfp(\lambda 
\vec{a}. \dot{\alpha}_{\Const}(\Sigma_0) \:\dot{\sqcup}_{\Const}
\dot{\alpha}_{\Const}(\post_\cP(\dot{\gamma}_{\Const}(\vec{a})))).
\]
However,   fixpoint completeness is lost as soon
as affine linear Boolean guards are taken into account, 
because, in general, linear guards are not pointwise complete: for example, with 
$n=2$, we have that:
\begin{align*}
&\alpha_{\Const}(t_{x_1=0}(\{(1,0), (-1,0)\})) = \alpha_{\Const}(\varnothing)=
\bot \quad \text{while}
\\
&
\begin{aligned}
\alpha_{\Const}(t_{x_1=0}(\gamma_{\Const}(\alpha_{\Const}(\{(1,0), (-1,0)\}))) &= \alpha_{\Const}(t_{x_1=0}(\bZ\times \{0\}))\\
&=\alpha_{\Const}(\{(0,0)\})=\tuple{0,0}
\end{aligned}
\end{align*}
It is therefore worth remarking that the decidability result in $\Const$
for the programs in $\cC_{\Const}$ of Theorem~\ref{deci-const} holds even if
fixpoint completeness on $\Const$ for 
all the programs in $\cC_{\Const}$ does not hold.

\subsection{Karr's Affine Equalities Domain}

Program analysis on the domain 
of affine equalities has been introduced in 1976 by 
Karr~\cite{karr76}  who designed some algorithms
which compute for each program point the valid 
affine equalities between program variables.  
This abstract domain, here denoted by $\Aff$, is relatively simple, precise and is 
widely used in numerical program analysis (see, \textit{e.g.}, \cite{mine17,rival}). 
M\"{u}ller-Olm and Seidl \cite{seidl04}
put forward simpler and more efficient algorithms for the transfer functions of $\Aff$ and proved that   
Karr's analysis algorithm is fixpoint complete 
for unguarded nondeterministic affine programs, while 
for linearly guarded nondeterministic affine programs 
it is undecidable whether a 
given affine equality holds at a program point or not.

Let us briefly recall the definition of the abstract domain $\Aff$, where here the $n$
program variables assume rational values, that is, $\bV=\bQ$. 
The abstract invariants in $\Aff$ are finite (possibly empty) conjunctions
of affine equalities between variables in $\Var_n$, namely, $\bigwedge_{j=1}^k (\sum_{i=1}^n m_{i,j} x_i + b_j =0)$.
The geometric view of $\Aff$ is based on 
the affine hull of a subset $X\in \wp(\bQ^n)$ which is defined by: 
\begin{equation*}
\aff(X)\ud \{\textstyle{\sum_{j=0}^m} \lambda_j \vec{x}_j \in \bQ^n \mid m\in \bN,\, \lambda_j \in \bQ,\, 
\vec{x}_j\in X,\, \textstyle{\sum_{j=0}^m} \lambda_j=1 \}.
\end{equation*}
This map $\aff:\wp(\bQ^n) \ra \wp(\bQ^n)$ is an upper 
closure  on $\tuple{\wp(\bQ^n),\subseteq}$ whose
fixpoints are the affine subspaces in $\bQ^n$ and 
therefore defines the affine equalities domain $\Aff\ud \tuple{\aff(\wp(\bQ^n)),\subseteq}$.
Thus, any conjunction of affine equalities 
can be represented by an affine subspace, and vice versa. 
$\tuple{\aff(\wp(\bQ^n)),\subseteq}$ is closed under arbitrary intersections, because $\aff$ is a uco, but not under unions, \textit{i.e.}, $\aff$ is not
an additive uco. 
The lub in $\Aff$ of a set of affine subspaces $\cX\subseteq \Aff$ is given by 
$\sqcup_{\Aff} \cX \ud \aff(\cup_{X \in \cX} X)$. 
A complex 
algorithm for computing a binary lub $A \sqcup_{\Aff} A'$
of two affine subspaces $A,A'\in \Aff$ was given by 
Karr~\cite[Section~5.2]{karr76}, 
a simpler and more efficient version
is described by Min\'e \cite[Section 5.2.2]{mine17}.
$\tuple{\Aff,\subseteq}$ is a complete lattice of finite height $n+1$, because
if $A,A'\in \Aff$ and $A\subsetneq A'$ then $\dim(A) < \dim(A')$, 
where $\dim(\varnothing)=-1$ and $\dim(\bQ^n)=n$. 
\\
\indent
Karr gave already in 
\cite[Section 4.2]{karr76} an algorithm for computing the bca of 
an affine assignment 
$t_a^j \equiv x_j:=\sum_{i=1}^n m_i x_i + b$, with $j\in [1,n]$.  
Karr's algorithm is 
based on representing affine subspaces by kernels
of affine transformations (\textit{i.e.},  matrices) 
and distinguishes between invertible (such as 
$x_i := x_i +k$) and noninvertible (such as $x_i := x_j +k$)
affine assignments.  
M\"{u}ller-Olm and Seidl \cite{seidl04} put forward a
more efficient algorithm  which is based on representing affine
subspaces by affine bases. It is worth remarking that \cite[Lemma~2]{seidl04} also observes
that the bca of $t_a^j$ is pointwise complete, namely 
$\aff \circ \: t_a^j \circ \aff  = \aff \circ\: t_a^j$ holds. 
We are not interested here in the details and complexities 
of these algorithms which implement
the best abstraction in $\Aff$ of the affine assignments $t_a^j$, 
since here we just exploit the fact that the best correct approximation
$\aff \circ\: t_a^j : \Aff \ra \Aff$ is computable. 
In turn, computability of parallel affine assignments 
$\vec{x}:= M\vec{x} + \vec{b}$ easily follows. 
\\
\indent
M\"{u}ller-Olm and Seidl \cite{seidl04} also show that the bca of 
a nondeterministic assignments  $\ok{t^j_{a?}\dequiv x_j :=?}$ 
is computable, where for all $Y\in \wp(\bQ^n)$, the corresponding transfer function is defined by:
$t_{x_j:=?}(Y) \ud \{\vec{y}[y_j/v] \in \bQ^n \mid \vec{y}\in Y,\, 
v \in \bQ\}$.
An observation in \cite[Lemma~4]{seidl04} 
states that $\aff (t_{x_j:=?}(\aff(Y))
= \aff (t_{x_j:=0}(\aff(Y)) \sqcup_{\Aff} \aff (t_{x_j:=1}(\aff(Y))$, thus reducing the computation of 
$\aff (t_{x_j:=?}(\aff(Y))$ to computing the lub
of the bca's of the transfer functions 
of two single affine assignments
$x_j:=0$ and $x_j:=1$.
\\
\indent
As already shown by Karr \cite[Section~4.1]{karr76} 
(see also \cite[Section~5.2.3]{mine17} for a modern approach),
it turns out that best correct approximations of 
affine equalities Boolean guards of the shape 
$\ok{t_b^= \dequiv \sum_{i=1}^n m_i x_i + b = 0}$ are computable.
 However, 
no algorithm for computing the bca of a generic negation 
$\ok{t_b^{\neq} \dequiv \sum_{i=1}^n m_i x_i + b \neq 0}$ is known 
in literature. Both 
Karr~\cite[Section~4.1]{karr76} and Min\'e \cite[Section~5.2.3]{mine17}
propose to soundly approximate a negated affine equality guard $t_b^{\neq}$ simply by the identity
transfer function, meaning that 
$t_b^{\neq}$ is replaced by the $\mathit{true}$ predicate. 
Karr~\cite[Section~4.1]{karr76} mentions  that
``a general study 
of how best to handle decision nodes which are not of the simple form $t_b^=$ is \emph{in preparation}'', but this document never appeared. To the best
of our knowledge, the literature provides no algorithm for 
computing the bca of $t_b^{\neq}$ in $\Aff$, and we conjecture that, in general,
it could not be computable.

Summing up, as a consequence of Corollaries~\ref{coro-three} and~\ref{coro-algo} 
we therefore obtain the following result for the class $\cC_{\Aff}$ of 
nondeterministic programs with (possibly parallel) affine assignments, 
(possibly parallel) nondeterministic assignments and 
(conjunctive or disjunctive) positive linear equalities guards.

\begin{theorem}[\textbf{Decidability and Synthesis of Inductive Invariants in $\Aff$}]\label{th-karr}
The Monniaux problem \eqref{kildall-q} on $\Aff$ for programs in $\cC_{\Aff}$,
r.e.\ sets of initial states and recursive sets of state
properties is decidable. 
Moreover, 
the algorithm $\AInv$ of Corollary~\ref{coro-algo} instantiated to
$\post_\cP$ for programs $\cP\in \cC_{\Aff}$, r.e.\ sets of initial states and safety properties 
$P\in \Aff$ synthesizes the least
inductive invariant of $\cP$ in $\Aff$, when this exists. 
\end{theorem}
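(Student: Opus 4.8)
The plan is to instantiate Corollaries~\ref{coro-three} and~\ref{coro-algo} to the domain $\Aff$ and the class $\cC_{\Aff}$, following verbatim the pattern that yielded Theorem~\ref{deci-const} for $\Const$. For the decidability claim, by Corollary~\ref{coro-three} it suffices to show that the test~\eqref{kildall-q} on $\Aff$ is decidable, i.e.\ that the least fixpoint $\vec{a}^{*}\ud \lfp(\lambda \vec{a}\in \Aff^{|Q|}.\, \dot\alpha_{\Aff}(\Sigma_0)\,\dot\sqcup_{\Aff}\,\dot\alpha_{\Aff}(\post_\cP(\dot\gamma_{\Aff}(\vec{a}))))$ can be effectively computed and that $\dot\gamma_{\Aff}(\vec{a}^{*})\dotted{\subseteq} P$ can be decided. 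Since $\tuple{\Aff,\subseteq}$ has finite height $n+1$, the product $\Aff^{|Q|}$ satisfies the ACC, so the Kleene iterates of this monotone map stabilize after finitely many steps and coincide with $\vec{a}^{*}$; hence it is enough to argue that a single iteration is computable and that the final inclusion is decidable.

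I would then verify the computability ingredients of one iteration, in direct analogy with points~(i)--(iii) of the $\Const$ development. First, $\dot\alpha_{\Aff}(\Sigma_0)$ is computable when $\Sigma_0$ is r.e.: enumerating the (r.e.)\ points of each $Q$-component and incrementally forming their affine hull produces a non-decreasing chain of affine subspaces whose dimension is bounded by $n$, so the chain stabilizes after finitely many strictly increasing steps on the value of $\alpha_{\Aff}$ on that component. Second, the binary lub $\sqcup_{\Aff}$ is computable by Karr's algorithm~\cite[Section~5.2]{karr76} (equivalently Min\'e~\cite[Section~5.2.2]{mine17}), hence so is the $Q$-indexed lub $\dot\sqcup_{\Aff}$. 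Third, the bca $\dot\alpha_{\Aff}\circ\post_\cP\circ\dot\gamma_{\Aff}$ is computable: decomposing it exactly as in~\eqref{eq-const} reduces it to a finite lub of the bca's $\aff\circ t$ of the individual transfer functions $t\in\Tau$, and each such bca is computable for every transfer function admitted in $\cC_{\Aff}$, namely (parallel) affine assignments~\cite[Section~4.2]{karr76} and~\cite{seidl04}, nondeterministic assignments~\cite[Lemma~4]{seidl04}, and positive affine-equality guards $t_b^{=}$~\cite[Section~4.1]{karr76}. Finally, as in point~(i), since $\dot\gamma_{\Aff}(\vec{a}^{*})$ is an affine subspace and $P$ is recursive, the test $\dot\gamma_{\Aff}(\vec{a}^{*})\dotted{\subseteq} P$ is decidable; this settles the decidability claim.

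For the synthesis claim I would check the hypotheses of Corollary~\ref{coro-algo} with $A=\Aff$: the lub $\sqcup_{\Aff}$ and the bca $\aff\circ\post_\cP$ are computable (as above), the inclusion order on affine subspaces is decidable by linear algebra, and $\Aff$ is an ACC CPO because it has finite height $n+1$. Since $\dot\alpha_{\Aff}(\Sigma_0)$ is computable for r.e.\ $\Sigma_0$, the procedure $\AInv(\post_\cP,\Aff,\Sigma_0,P)$ is applicable for $P\in\Aff$, and Corollary~\ref{coro-algo} guarantees that it terminates and returns the least inductive invariant of $\cP$ in $\Aff$ whenever one exists, reporting its non-existence otherwise.

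The cited algorithms of Karr and of M\"uller-Olm and Seidl supply the substantive computability content (the bca's of assignments and positive guards, and the binary lub), so the genuine work left to the proof is to assemble them through the finite-height termination argument and the decomposition~\eqref{eq-const}. I expect the most delicate points to be the two that are \emph{not} covered by those linear-algebra algorithms: the termination of the affine-hull computation of $\dot\alpha_{\Aff}(\Sigma_0)$ for r.e.\ $\Sigma_0$, which hinges essentially on the finite-dimension bound, and the decidability of the final inclusion $\dot\gamma_{\Aff}(\vec{a}^{*})\dotted{\subseteq} P$ for a merely recursive $P$. Finally, it is worth stressing that the whole construction is confined to \emph{positive} equality guards: for negated equalities $t_b^{\neq}$ no computable bca in $\Aff$ is known (and is conjectured not to exist), which is exactly why $\cC_{\Aff}$ is defined without them, since admitting them would invalidate the third ingredient above.
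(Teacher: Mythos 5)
Your proposal is correct and follows essentially the same route as the paper: the paper's proof of Theorem~\ref{th-karr} is precisely the ``summing up'' of Corollaries~\ref{coro-three} and~\ref{coro-algo} with the finite height $n+1$ of $\Aff$ for termination, the computability of the bca's of (parallel) affine assignments, nondeterministic assignments and positive equality guards via Karr and M\"uller-Olm--Seidl, and the exclusion of negated guards $t_b^{\neq}$ for exactly the reason you give. Your explicit verification of the analogues of items (i)--(iii) of the $\Const$ development (computability of $\dot\alpha_{\Aff}(\Sigma_0)$ for r.e.\ $\Sigma_0$ via the dimension bound, and decidability of the final inclusion against a recursive $P$) merely spells out details the paper leaves implicit by appeal to that earlier pattern.
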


\begin{figure}[t]
\centering
\qquad\begin{tikzpicture}[shorten >=0pt,node distance=1.2cm,on grid,>=stealth',every state/.style={inner sep=0pt, minimum size=5mm, draw=blue!50,very thick,fill=blue!10}]
\node[state]         (q1)	{{\scriptsize $q_1$}}; 
\node[state]         (q2) [below=of q1] {{\scriptsize $q_2$}};
\node[state]         (q3) [below=of q2] {{\scriptsize $q_3$}};
\node[state]         (q4) [below=of q3] {{\scriptsize $q_4$}};

\path[->] (q1) edge [right,align=left,midway] node {{\scriptsize \;\,$x_1:=-2;$}\\[-3pt] {\scriptsize \;\,$x_2:=1;$} \\[-3pt] {\scriptsize \;\,$x_3:=1;$}} (q2)
          (q2) edge [bend right=90,out=290,in=240] node[left,align=left,midway] {{\scriptsize$x_1:=-2x_2- 2;$} \\[-3pt] {\scriptsize  $x_2:=x_2 +x_3;$}} (q3)
          (q2) edge [bend left=90,out=70,in=120] node[right,align=left,midway] {{\scriptsize$x_1:=2x_1 +4;$}  \\[-3pt] {\scriptsize $x_2:=-x_1-2x_3;$}} (q3)
          (q3) edge [right,align=left,midway] node {{\scriptsize $x_1+2x_3=0$}} (q4)
          ;

\draw[->] (q3) .. controls +(right:4.3cm) and +(right:4.3cm) .. 
node[right] {}   (q2);
\end{tikzpicture}
\caption{A nondeterministic program $\cR$.}\label{fig-cfg}
\end{figure}

\begin{examplebf}
Let us consider the nondeterministic program $\cR$ in Fig.~\ref{fig-cfg} with
$\Sigma_0=\{q_1\}\times \bQ^3$ and the property 
$P^\sharp = \ok{\big\langle \tuple{q_1,\top}, \tuple{q_2,\top},\tuple{q_3,\top},\tuple{q_4, x_1 +x_2 +1 =0} \big\rangle}\in \ok{\Aff^{|Q|}}$. 
The algorithm $\AInv$ of Corollary~\ref{coro-algo}
yields the following sequence of abstract values $I^j\in \ok{\Aff^{|Q|}}$:
\begin{align*}
&I^0=\dot{\alpha}_{\Aff}(\Sigma_0)=\big\langle\tuple{q_1,\top}, 
\tuple{q_2,\bot},\tuple{q_3,\bot},\tuple{q_4,\bot}\big\rangle\\
&I^1=\big\langle\tuple{q_1,\top}, 
\tuple{q_2,x_1+2=0 \wedge x_2-1=0 \wedge x_3-1=0},\tuple{q_3,\bot},\tuple{q_4,\bot}\big\rangle\\
&I^2=\big\langle\tuple{q_1,\top}, 
\tuple{q_2,x_1+2=0 \wedge x_2-1=0 \wedge x_3-1=0},\\
&\qquad\qquad\qquad\, \tuple{q_3,x_1+2x_2=0 \wedge x_3=1},\tuple{q_4,\bot}\big\rangle\\
&I^3=\big\langle\tuple{q_1,\top}, 
\tuple{q_2,x_1+2x_2=0 \wedge x_3=1},\tuple{q_3,x_1+2x_2=0 \wedge x_3=1},\\&\qquad\qquad\qquad\,\tuple{q_4,x_1+x_2+1=0 \wedge x_3=1,\bot}\big\rangle=
\dot{\alpha}_{\Aff}(\post_{\cR}(\dot{\gamma}_{\Aff}(I^3)))\:\dot{\leq}_{\Aff}\, P^\sharp
\end{align*}
The output $I^3$ is the analysis of $\cR$ with the bca's of its transfer functions in $\Aff$, \textit{i.e.},  it is 
the least 
inductive invariant in $\Aff$ which allows us to prove that $P^\sharp$ holds.
\qed
\end{examplebf}

\paragraph{\textbf{Relationship with Completeness.}}
M\"{u}ller-Olm and Seidl~\cite{seidl04} implicitly show that 
the transfer functions of affine assignments $t_a$ and  
of nondeterministic assignments $t_{x_j:=?}$ are pointwise complete. 
In fact, \cite[Lemma~2]{seidl04} shows that for 
all $X\in \wp(\bQ^n)$, $t_a(\aff(X))=\aff(t_a(X))$, from which we easily obtain: 
\begin{align*}
\aff(t_a(X)) &=\aff(\aff(t_a(X)))=\aff(t_a(\aff(X))) 
\\[4pt]
\aff (t_{x_j:=?}(X)) &= \aff(\cup_{z\in \bQ} t_{x_j:=z}(X)) = \aff(\cup_{z\in \bQ} \aff(t_{x_j:=z}(X)))=\\
& \hspace*{-55pt}= \aff(\cup_{z\in \bQ} \aff(t_{x_j:=z}(\aff(X))) = \aff(\cup_{z\in \bQ} t_{x_j:=z}(\aff(X)) = \aff(t_{x_j:=?}(\aff(X)))
\end{align*}
Thus, since pointwise
completeness entails fixpoint completeness, similarly to 
Section~\ref{const-sec},
for all unguarded programs $\cP$ with 
affine and nondeterministic assignments, 
$\dot{\alpha}_{\Aff}(\lfp(\lambda 
\vec{X}. \Sigma_0 \cup \post_\cP(\vec{X}))) = 
\lfp(\lambda 
\vec{a}. \dot{\alpha}_{\Aff}(\Sigma_0)\: \dot{\sqcup}_{\Aff}$ $\dot{\alpha}_{\Aff}(\post_\cP(\dot{\gamma}_{\Aff}(\vec{a}))))$ holds. 
Here, fixpoint completeness is lost as soon
as affine equality guards are included (these programs still belong
to $\cC_{\Aff}$), because they are not pointwise complete, \textit{e.g.}: 
\begin{align*}
&\aff(t_{x_1=0}(\{(1,0), (-1,0)\})) = \aff(\varnothing)=
\varnothing \quad \text{while}\\
&\aff(t_{x_1=0}(\aff(\{(1,0), (-1,0)\})))=
\aff(t_{x_1=0}(x_2=0)) = \aff(\{(0,0)\})= \{(0,0)\}
\end{align*}
Moreover, a result in \cite[Section~7]{seidl04}
proves that
once affine equality guards are added to 
nondeterministic affine programs 
it becomes undecidable whether a 
given affine equality holds in some program point or not.
This undecidability 
does not prevent the decidability result of Theorem~\ref{th-karr}, 
since these two decision problems are orthogonal. 

\section{Co-Inductive Synthesis of Abstract Inductive Invariants}

In the following we design a synthesis algorithm which, 
by generalizing a recent algorithm by Padon et al.~\cite{padon16}, 
outputs the \emph{most abstract} inductive invariant in an
abstract domain, when this exists. This algorithm is obtained by dualizing 
the procedure $\AInv$ in Corollary~\ref{coro-algo} to a co-inductive greatest fixpoint computation 
and will require that 
the abstract domain is equipped 
with a suitable well-quasiorder relation.
Let us recall that a qoset $D_\leq$ is a well-quasiordered set (wqoset), and $\leq$ is called well-quasiorder (wqo), when for 
every countably infinite  sequence of elements \(\{x_i\}_{i\in \bN}\) in $D$ there exist \(i,j\in \bN\) such that \(i<j\) and \(x_i\leq x_j\).
Equivalently,  \(D\) is a wqoset iff $D$ is DCC (also called well-founded) and 
$D$  has no infinite antichain.

Let $\cT=\tuple{\Sigma,\tau}$ 
be a transition system whose successor transformer is $\post$, 
so that $\lfp(\lambda X.\Sigma_0 \cup \post(X))\in \wp(\Sigma)$ 
are the reachable states of $\cT$ from some set of initial states $\Sigma_0\in \wp(\Sigma)$. \cite{padon16} considers
abstract invariants ranging in a set (of semantics of logical formulae)
$L\subseteq \wp(\Sigma)$ and assumes (in \cite[Theorem~4.2]{padon16}) that $\tuple{L,\subseteq}$ is closed under finite intersections (\textit{i.e.}, logical conjunctions).
Accordingly to Assumption~\ref{assu}, 
we ask that $\tuple{L,\subseteq}$ satisfies the 
requirement  of being an abstract domain of the concrete
domain $\tuple{\wp(\Sigma),\subseteq}$, which corresponds to ask that $\tuple{L,\subseteq}$  is closed 
under arbitrary, rather than finite, intersections. Thus,
$L$ is the image of an upper closure 
$\umul\in \uco(\wp(\Sigma)_\subseteq)$ defined by: 
$\umul(X)\ud 
\cap \{\phi \in L \mid X \subseteq \phi\}$. 

The three key definitions and related assumptions of 
the synthesis algorithm defined in
\cite[Theorem~4.2]{padon16} concern a
quasi-order $\sqsubseteq_L \: \mathrel{\subseteq} \Sigma\times \Sigma$ between states, 
a function $\av_L:\Sigma \ra \wp(\Sigma)$ called Avoid, and an abstract transition relation $\tau^L \subseteq \Sigma \times \Sigma$:
\begin{align*}
&{\rm (1)\;} s \sqsubseteq_L s' \;\diff\; \forall \phi \in L. s'\in \phi \Ra s \in \phi&& \text{Assumption~(A$_1$): $\tuple{\Sigma,\sqsubseteq_L}$ is a wqoset}  \\
&{\rm (2)\,} \av_L(s)\ud \cup \{\phi \in L \mid \phi \subseteq \neg \{s\}\}
&& \text{Assumption~(A$_2$): $\forall s\in \Sigma.\:\av_L(s)\in L$}\\
&{\rm (3)\;} (s,s')\in \tau^L \;\diff\; (s,s')\in \tau \vee s' \sqsubseteq_L s&&
\text{Abstract Transition System $\cT^L\ud \tuple{\Sigma,\tau^L}$}
\end{align*}

Correspondingly, we define the down-closure $\delta_L:\wp(\Sigma)\ra \wp(\Sigma)$ 
of $\sqsubseteq_L$, 
we generalize $\av_L:\wp(\Sigma) \ra \wp(\Sigma)$ to sets
of states and we define the successor transformer 
$\post^L: \wp(\Sigma)\ra \wp(\Sigma)$ of $\cT^L$ as follows:
\begin{align*}
&{\rm (1)\;} \delta_L(X)\ud \{s\in \Sigma \mid \exists s'\in X. s \sqsubseteq_L s'\} &&\text{Down-closure of the qo $\sqsubseteq_L$}\\
&{\rm (2)\;} \av_L(X)\ud 
\cup \{\phi \in L \mid \phi \subseteq \neg X\} &&\text{$\av_L$ for any set $X$ of states}\\
&{\rm (3)\;} \post^L(X) \ud \post(X) \cup \delta_L(X) &&\text{Successor transformer of $\cT^L$} 
\end{align*}

\begin{lemma}\label{lemma6} The following properties hold: 
\begin{enumerate}[{\rm (a)}]
\item $s \sqsubseteq_L s'$ iff $\umul(\{s\})\subseteq \umul(\{s'\})$.

\item {\rm (A$_1$)} holds iff 
$\tuple{L,\subseteq}$ is a well-quasi order.
\item {\rm (A$_2$)} holds iff $L$ is closed under arbitrary unions.
\item If {\rm (A$_2$)} holds then 
$\delta_L = \umul$ and both are additive ucos (and $\delta_L(\phi)=\phi \Lra \phi\in L$).%
\item For all $\Sigma_0\in \wp(\Sigma)$,
$\lfp(\lambda X. \Sigma_0 \cup \post^L(X))=
\lfp(\lambda X.\umul(\Sigma_0 \cup \post(X)))$.
\end{enumerate}
\end{lemma}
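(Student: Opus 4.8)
The plan is to derive everything from the single observation underlying part~(a): for each state $s$, the set $\umul(\{s\})=\cap\{\phi\in L\mid s\in\phi\}$ is the \emph{least} element of $L$ containing $s$. I would first prove~(a) by unfolding $\sqsubseteq_L$: the condition $s\sqsubseteq_L s'$ says exactly that $s$ lies in every $\phi\in L$ containing $s'$, i.e.\ $s\in\cap\{\phi\in L\mid s'\in\phi\}=\umul(\{s'\})$. This yields at once the pointwise identity $\{s\mid s\sqsubseteq_L s'\}=\umul(\{s'\})$, and, since $\umul(\{s\})$ is the least closed set above $s$, the stated equivalence $s\sqsubseteq_L s'\Lra\umul(\{s\})\subseteq\umul(\{s'\})$. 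In particular the map $s\mapsto\umul(\{s\})$ both preserves and reflects the order, a fact I reuse throughout.

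For~(c) I would characterise $\av_L(s)=\cup\{\phi\in L\mid\phi\subseteq\neg\{s\}\}$ as the largest $L$-set avoiding $s$, noting $s\notin\av_L(s)$. If $L$ is closed under arbitrary unions then $\av_L(s)$ is patently in $L$, giving (A$_2$). Conversely, assuming (A$_2$), for any family $\{\phi_k\}\subseteq L$ with union $U$ and any $s\notin U$ one has $U\subseteq\av_L(s)\in L$ with $s\notin\av_L(s)$, whence $\umul(U)\subseteq\av_L(s)$ and $s\notin\umul(U)$; letting $s$ range over $\neg U$ forces $\umul(U)=U$, so $U\in L$. For~(d), under (A$_2$) I would feed the pointwise identity from~(a) into $\delta_L(X)=\cup_{s'\in X}\umul(\{s'\})$; union-closure makes this an element of $L$ sandwiched between $X$ and $\umul(X)$, hence equal to $\umul(X)$. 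Additivity is immediate from this union form, and $\delta_L(\phi)=\phi\Lra\phi\in L$ is just the statement that the fixpoints of the uco $\umul$ are its image $L$.

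For~(e) I would use $\delta_L=\umul$ from part~(d), so that $\post^L(X)=\post(X)\cup\umul(X)$ and the two operators are $F(X)=\Sigma_0\cup\post(X)\cup\umul(X)$ and $G(X)=\umul(\Sigma_0\cup\post(X))$. A short mutual-inequality argument closes it: $\lfp(G)\in L$, so $\umul(\lfp G)=\lfp G\supseteq\Sigma_0\cup\post(\lfp G)$ makes $\lfp(G)$ a fixpoint of $F$, giving $\lfp(F)\subseteq\lfp(G)$; and $\lfp(F)\supseteq\umul(\lfp F)$ forces $\lfp(F)\in L$, so $\Sigma_0\cup\post(\lfp F)\subseteq\lfp F\in L$ yields $G(\lfp F)\subseteq\lfp F$ and thus $\lfp(G)\subseteq\lfp(F)$.

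The step I expect to be the real obstacle is~(b). Via~(a), $\tuple{\Sigma,\sqsubseteq_L}$ is a wqoset iff the principal closed sets $\{\umul(\{s\})\mid s\in\Sigma\}$ form a wqo under $\subseteq$, and the implication ``$\tuple{L,\subseteq}$ wqo $\Ra$ (A$_1$)'' is then free, since this is a sub-poset of $L$. The converse, ``(A$_1$) $\Ra\tuple{L,\subseteq}$ wqo'', is the delicate one: DCC of $L$ comes for free (every $\phi\in L$ is $\sqsubseteq_L$-downward closed, and the downward-closed subsets of a wqoset satisfy DCC), but ruling out infinite antichains in $L$ is exactly the point where passing from a wqo on states to a wqo on its lattice of closed sets is known to be subtle in general. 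I would concentrate the effort here, rewriting each $\phi\in L$ through its complement as a finitely generated $\sqsubseteq_L$-upward closure (finite generation being available because $\Sigma$ is wqo) and analysing the resulting finite antichains of generators directly, treating this as the genuine crux of the lemma rather than a routine transfer.
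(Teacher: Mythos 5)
Your treatment of parts (a), (c), (d), (e) and of the backward half of (b) is correct and in substance the same as the paper's: (a) is the same unfolding of $\sqsubseteq_L$; your converse in (c) (showing $\umul(U)\subseteq \av_L(s)\not\ni s$ for every $s\notin U$, hence $\umul(U)=U$) is a harmless variant of the paper's computation $\av_L(\neg \cup\Phi)=\cup\Phi$; (d) rests on the same identity $\delta_L(X)=\cup_{s\in X}\umul(\{s\})$; and your two mutual prefixpoint inequalities in (e) are exactly what the paper's chain of Knaster--Tarski equivalences establishes.

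The genuine gap is the forward half of (b), namely (A$_1$) $\Ra$ $\tuple{L,\subseteq}$ wqo. You prove DCC but explicitly defer the exclusion of infinite antichains, proposing to encode each $\phi\in L$ by the finite antichain of minimal generators of the upward-closed complement $\neg\phi$. This plan cannot be completed, because the implication is false as stated. Take Rado's order: $\Sigma\ud\{(a,b)\in\bN^2\mid a<b\}$ with $(a,b)\le (c,d)$ iff ($a=c$ and $b\le d$) or $b<c$; this is a wqo (a sequence whose first coordinates repeat infinitely often contains an increasing pair in its second coordinates, otherwise the first coordinates are unbounded and the clause $b<c$ applies). Let $L$ be the set of \emph{all} $\le$-downward-closed subsets of $\Sigma$. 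Then $L$ is closed under arbitrary intersections and unions, so the standing assumption and (A$_2$) hold, and $\sqsubseteq_L$ coincides with $\le$ (use principal downsets), so (A$_1$) holds. Yet the sets $D_n\ud\{(n,b)\mid b>n\}\cup\{(a,b)\mid a<b<n\}$, the down-closures of the rows, form an infinite antichain in $\tuple{L,\subseteq}$: for $n<m$ one has $(n,m+1)\in D_n\smallsetminus D_m$ and $(m,m+1)\in D_m\smallsetminus D_n$. Each complement $\neg D_n$ is indeed a finitely generated upset, exactly as your strategy requires; the obstruction is that inclusion of downsets corresponds to the Smyth-type comparison of those finite generator antichains, and that order over a wqo is not in general a wqo --- precisely the wqo-versus-bqo phenomenon you allude to. So no analysis of the finite generators can rule out infinite antichains in $L$.

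Your instinct that this is the crux is exactly right, and you should know the paper does not overcome it either: from a putative antichain $\{\phi_i\}_{i\in\bN}$ its proof picks $s_{i,j}\in\phi_i\smallsetminus\phi_j$, shows that $\umul(\{s_{i,j}\})$ and $\umul(\{s_{j,i}\})$ are incomparable, and then asserts that, for fixed $i$, the family $\{\umul(\{s_{i,j}\})\mid j\neq i\}$ is an infinite antichain; this does not follow, since only swapped-index pairs were compared (in the Rado example the natural choice of the $s_{i,j}$ even makes that family a chain). What is true, and what you did prove, is the DCC half, which is also all that the paper's later termination argument for its synthesis algorithm actually needs. So the verdict is: your proposal is correct except for (b)($\Ra$), which you leave open, and that step cannot be closed by your strategy --- nor by the paper's --- because the stated equivalence fails without stronger, bqo-style hypotheses on $\sqsubseteq_L$.
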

\begin{proof} 
(a) If  $s \sqsubseteq_L s'$ and $t\in \umul(\{s\})$ then $t\in \umul(\{s'\})=\cap 
\{\phi \in L \mid s'\in \phi\}$: if $\phi \in L$ and $s'\in \phi$ then $s\in \phi$, so 
that, since $t\in \umul(\{s\})$, $t\in \phi$. Conversely, if $\umul(\{s\})\subseteq \umul(\{s'\})$,
$\phi \in L$ and $s'\in \phi$, then, since $s\in \umul(\{s'\})$, $s\in \phi$.

\noindent
(b) By (a), we equivalently prove that
$\tuple{\{\umul(\{s\}) \mid s\in \Sigma\},\subseteq}$ is a wqo iff 
$\tuple{L,\subseteq}$ is a wqo.
 
\noindent 
$(\Ra)$: \cite[Lemma~4.6]{padon16} proves $\tuple{L,\subseteq}$ is well-founded, 
we additionally show that it does not contain infinite antichains. 
By contradiction, assume that $\{\phi_i\}_{i\in \bN}$ is an infinite antichain
in $\tuple{L,\subseteq}$. Thus, for all $i\neq j$, $\phi_i\not\subseteq\phi_j$
and $\phi_j\not\subseteq \phi_i$, so that there exist $s_{i,j}\in \phi_i\smallsetminus \phi_j$ and $s_{j,i}\in \phi_j\smallsetminus \phi_i$.
From $s_{j,i}\in \phi_j$ we obtain that 
$\umul(\{s_{j,i}\})\subseteq \umul(\phi_j)=\phi_j$. From $s_{i,j}\not\in
\phi_j$, we obtain that $s_{i,j}\in \umul(\{s_{i,j}\})\not\subseteq \phi_{j,i}$. 
It turns out that $\umul(\{s_{i,j}\})\not\subseteq \umul(\{s_{j,i}\})$,
otherwise from $s_{i,j}\in \umul(\{s_{i,j}\})
\subseteq \umul(\{s_{j,i}\})\subseteq \phi_j$ we would obtain
the contradiction $s_{i,j}\in\phi_j$. Dually, $\umul(\{s_{j,i}\})\not\subseteq \umul(\{s_{i,j}\})$ holds. 
Thus, for any $i\in \bN$, $\{\umul(\{s_{i,j}\}) \mid j\in \bN, j\neq i\}$ is an infinite antichain in 
$\tuple{\{\umul(\{s\}) \mid s\in \Sigma\},\subseteq}$,
which is a contradiction.

\noindent
$(\La)$: $\tuple{\{\umul(\{s\}) \mid s\in \Sigma\},\subseteq}$ is trivially 
a wqo because $\{\umul(\{s\}) \mid s\in \Sigma\}\subseteq L$ and $L$ is a wqo.

\noindent
(c)
Assume that for all $s\in \Sigma$, 
$\av_L(s)\in L$.
Let us show that for all $S\in \wp(\Sigma)$, $\cap_{s\in S} \av_L(s) = \av_L(S)$. 

\noindent 
$(\supseteq)$: Let $t\in \phi$ for some $\phi\in L$ such that $\phi \subseteq \neg S$. Then, for all $s\in S$, $\phi \subseteq \av_L(s)$, so that $t\in 
\cap_{s\in S} \av_L(s)$.

\noindent 
$(\subseteq)$:
Let $t\in \cap_{s\in S} \av_L(s)$. For all $s\in S$, there exists $\phi_s\in L$
such that $\phi_s \subseteq \neg \{s\}$ and $t\in \phi_s$. Thus, 
$\cap_{s\in S}\phi_s\in L$ and $t\in \cap_{s\in S}\phi_s \subseteq \neg S$, meaning that $t\in \av_L(S)$. 

\noindent
Thus, since $L$ is assumed to be closed under arbitrary intersections we obtain that $\av_L(S)=\cap_{s\in S} \av_L(s)\in L$. 
Consider now $\Phi \subseteq L$.
Then, $\av_L(\neg (\cup \Phi)) = \cup\{\phi \in L \mid \phi \subseteq 
\neg\neg (\cup \Phi)\}=\cup\{\phi \in L \mid \phi \subseteq 
\cup \Phi\} = \cup\Phi$, so that $\cup \Phi\in L$.

\noindent
Conversely, if $L$ is closed under arbitrary unions then $\av_L(s)=
\cup \{\phi \in L \mid \phi \subseteq \neg \{s\}\}\in L$.

\noindent
(d) Since $\sqsubseteq_L$ is a quasi-order relation, its down-closure $\delta_L$ is an upper closure on $\tuple{\wp(\Sigma),\subseteq}$. 
By (a), $\delta_L(X)= \{s\in \Sigma \mid \exists s'\in X. s \sqsubseteq_L s'\}=
\{s\in \Sigma \mid \exists s'\in X. \umul(\{s\}) \subseteq \umul(\{s'\})\} =
\{s\in \Sigma \mid \exists s'\in X. s\in \umul(\{s'\})\}
=\cup_{s\in X} \umul(\{s\})$. By (c), since $L$ is closed under arbitrary unions, the upper closure $\umul$ is additive, so
that $\cup_{s\in X} \umul(\{s\})= \umul(\cup_{s\in X}\{s\})=\umul(X)$, consequently $\delta_L(X)=\umul(X)$. 
In particular, $\delta_L(\phi)=\phi \Lra \umul(\phi)=\phi \Lra \phi\in L$.

\noindent
(e)
By Lemma~\ref{lemma6}~(d), $\lfp(\lambda X. \post(X)\cup \delta_L(X)\cup \Sigma_0)=
\lfp(\lambda X. \post(X)\cup \umul(X)\cup \Sigma_0)$.
It turns out that
\begin{align*}
\Sigma_0  \cup \post(X)\cup \delta_L(X) \subseteq X &\Lra 
\quad\text{[by Lemma~\ref{lemma6}~(d)]}\\
\Sigma_0  \cup \post(X)\cup \umul(X)\subseteq X&\Lra 
\quad\text{[by set theory]}\\
\Sigma_0  \subseteq X \wedge \post(X)\subseteq X \wedge \umul(X) \subseteq X  &\Lra
\quad\text{[as $\umul$ is a uco]}\\
\Sigma_0 \subseteq X  \wedge \post(X)\subseteq X \wedge \umul(X)= X &\Lra
\quad\text{[as $\umul(X)=X$]}\\
\Sigma_0 \subseteq X \wedge \post(\umul(X))\subseteq X\wedge \umul(X)=X 
 &\Lra
\quad\text{[by set theory]}\\
\Sigma_0 \cup \post(\umul(X))\cup \umul(X) \subseteq X &
\end{align*}
Thus,  by Knaster-Tarski theorem, 
$\lfp(\lambda X.\Sigma_0 \cup \post(\umul(X)) \cup \umul(X))
=\lfp(\lambda X. \Sigma_0 \cup \post(X)\cup \delta_L(X))$. 
We also have that:
\begin{align*}
\Sigma_0 \cup \post(\umul(X))\cup \umul(X) \subseteq X &\Lra\quad\text{[by the equivalences above]}\\
\Sigma_0 \cup \post(\umul(X))\cup \umul(X) \subseteq X=\umul(X) &\Lra\quad\text{[by set theory]}\\
\Sigma_0 \cup \post(\umul(X)) \subseteq X=\umul(X) &\Lra\quad\text{[as $\umul(X)=X$]}\\
\Sigma_0 \cup \post(X) \subseteq X=\umul(X) &\Lra\quad\text{[as $\umul$ is a uco]}\\
\umul(\Sigma_0 \cup \post(X)) \subseteq X=\umul(X) &\Lra\quad\text{[by set theory]}\\
\umul(\Sigma_0 \cup \post(X)) \subseteq X &
\end{align*}
Thus, by Knaster-Tarski theorem, $\lfp(\umul(\Sigma_0 \cup \post(X))) \subseteq
\lfp(\lambda X.\Sigma_0 \cup \post(\umul(X)) \cup \umul(X))$ follows. 
Moreover, if $F\ud \lfp(\umul(\Sigma_0 \cup \post(X)))$, so that
$F=\umul(F) = \Sigma_0 \cup \post(F)$, then
$\Sigma_0 \cup \post(\umul(F)) \cup \umul(F)= \Sigma_0 \cup \post(F) \cup \umul(F) = F$, and this implies that 
$\lfp(\lambda X.\Sigma_0 \cup \post(\umul(X)) \cup \umul(X))
\subseteq \lfp(\umul(\Sigma_0 \cup \post(X)))$. Therefore, 
$\lfp(\lambda X.\Sigma_0 \cup \post(\umul(X)) \cup \umul(X))
=\lfp(\umul(\Sigma_0 \cup \post(X)))$.
\qed\end{proof}

In particular, Lemma~\ref{lemma6}~(e) states that the set 
of reachable states in the abstract transition system 
$\cT^L$ coincides with the set of reachable states
of the abstract transition system $\cT^{\umul}\ud \tuple{\Sigma,\umul \circ \post_\cT}$ obtained by considering the
best correct approximation of $\post_\cT$ in the abstract domain $\umul$.  As a direct consequence of the abstract inductive invariant principle 
Lemma~\ref{char}~(a), we 
obtain the following characterization of the abstract 
inductive invariants ranging in $L$ of the transition system $\cT$ 
 which relies on the 
reachable states of  its 
best abstraction  $\cT^\umul$ in the domain $\umul$.

\begin{corollary}\label{coro9}
Let $\Sigma_0,P\in \wp(\Sigma)$. Then,
$\exists \phi \in L. \Sigma_0 \subseteq \phi\wedge \post(\phi)\subseteq \phi \wedge \phi \subseteq P$ iff $\Reach[\cT^{\umul},\Sigma_0]\subseteq P$. 
\end{corollary}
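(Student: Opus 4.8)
The plan is to recognise the antecedent as an instance of the abstract inductive invariant principle and then to reduce the resulting least fixpoint to the reachable-states computation of $\cT^{\umul}$ by means of Lemma~\ref{lemma6}. First I would set $f \ud \lambda X.\, \Sigma_0 \cup \post(X) : \wp(\Sigma) \ra \wp(\Sigma)$, which is monotonic on the complete lattice $\wp(\Sigma)_\subseteq$, and observe that for any $\phi \in L$ the two conjuncts $\Sigma_0 \subseteq \phi$ and $\post(\phi) \subseteq \phi$ collapse into the single condition $\Sigma_0 \cup \post(\phi) \subseteq \phi$, that is $f(\phi) \subseteq \phi$. Since $L = \umul(\wp(\Sigma))$ and membership $\phi \in L$ is equivalent to $\umul(\phi)=\phi$, the antecedent $\exists \phi \in L.\, \Sigma_0 \subseteq \phi \wedge \post(\phi)\subseteq \phi \wedge \phi \subseteq P$ is literally the statement $\exists \phi \in \umul.\, f(\phi) \subseteq \phi \wedge \phi \subseteq P$.

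Next I would apply the closure-based form of Lemma~\ref{char}~(a), namely $\lfp(\umul f) \subseteq P \Lra \exists \phi \in \umul.\, f(\phi) \subseteq \phi \wedge \phi \subseteq P$, to the pair $\tuple{\umul, f}$. This turns the antecedent into the single inequality $\lfp(\umul f) \subseteq P$, where the best correct approximation unfolds to $\umul f = \lambda X.\, \umul(\Sigma_0 \cup \post(X))$.

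It then remains to identify this least fixpoint with the reachable states of $\cT^{\umul}$. Here I would invoke Lemma~\ref{lemma6}~(e), which gives $\lfp(\lambda X.\, \umul(\Sigma_0 \cup \post(X))) = \lfp(\lambda X.\, \Sigma_0 \cup \post^L(X)) = \Reach[\cT^L,\Sigma_0]$, together with the coincidence $\Reach[\cT^L,\Sigma_0] = \Reach[\cT^{\umul},\Sigma_0]$ recorded in the paragraph preceding the statement. Chaining the two equivalences yields $\lfp(\umul f) \subseteq P \Lra \Reach[\cT^{\umul},\Sigma_0] \subseteq P$, which closes the argument.

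I expect the main obstacle to lie precisely in this last identification. The reachable-states transformer of $\cT^{\umul} = \tuple{\Sigma, \umul \circ \post}$ is a priori $\lambda X.\, \Sigma_0 \cup \umul(\post(X))$, in which the initial states $\Sigma_0$ enter \emph{without} being abstracted, whereas Lemma~\ref{char}~(a) delivers the fixpoint of $\lambda X.\, \umul(\Sigma_0 \cup \post(X))$. Reconciling the two relies on the closure properties of $\umul$ established in Lemma~\ref{lemma6}~(d)--(e) --- in particular the additivity of $\umul$, which gives $\umul(\Sigma_0 \cup \post(X)) = \umul(\Sigma_0) \cup \umul(\post(X))$ --- rather than on any fresh computation; once these are in hand, the remaining steps are a routine chain of equivalences.
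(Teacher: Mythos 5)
The heart of your argument --- recognizing the antecedent as $\exists \phi\in\umul.\: f(\phi)\subseteq\phi \wedge \phi\subseteq P$ for $f=\lambda X.\,\Sigma_0\cup\post(X)$ and applying the uco form of Lemma~\ref{char}~(a) to obtain $\lfp(\umul f)\subseteq P$ --- is precisely the paper's proof, which consists of exactly these two steps and nothing more: the paper reads $\Reach[\cT^{\umul},\Sigma_0]$ as $\lfp(\lambda X.\,\umul(\Sigma_0\cup\post(X)))$, \textit{i.e.}\ the least fixpoint of the best correct approximation of the \emph{whole} transformer $\lambda X.\,\Sigma_0\cup\post(X)$, following the identification made in the paragraph preceding the corollary, so no further lemma is invoked. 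Your detour through $\cT^L$ and Lemma~\ref{lemma6}~(e) is therefore unnecessary, and it would import the hypothesis (A$_2$) (needed for Lemma~\ref{lemma6}~(d), on which (e) rests), which is not a standing assumption at this point of the paper.

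The genuine flaw is in your last paragraph, where you claim that the discrepancy between the literal reading $\lfp(\lambda X.\,\Sigma_0\cup\umul(\post(X)))$ and the fixpoint $\lfp(\lambda X.\,\umul(\Sigma_0\cup\post(X)))$ delivered by Lemma~\ref{char}~(a) is resolved by additivity of $\umul$, the rest being ``routine''. Additivity only yields $\umul(\Sigma_0\cup\post(X))=\umul(\Sigma_0)\cup\umul(\post(X))$, and the residual gap between $\umul(\Sigma_0)$ and $\Sigma_0$ is not routine: it requires $\umul(\Sigma_0)=\Sigma_0$, \textit{i.e.}\ $\Sigma_0\in L$, which is not among the hypotheses of the corollary. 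Without it the two fixpoints genuinely differ, even when (A$_2$) holds: take $\Sigma=\{1,2\}$, $\tau=\varnothing$, $L=\{\varnothing,\Sigma\}$ (closed under arbitrary unions and intersections), $\Sigma_0=P=\{1\}$; then $\lfp(\lambda X.\,\Sigma_0\cup\umul(\post(X)))=\{1\}\subseteq P$, whereas $\lfp(\lambda X.\,\umul(\Sigma_0\cup\post(X)))=\Sigma\not\subseteq P$ and no $\phi\in L$ satisfies $\Sigma_0\subseteq\phi\subseteq P$ --- so under the literal reading the corollary itself would be false. This is exactly why Lemma~\ref{lemma9}, which does carry out such a reconciliation later in the paper, additionally invokes $\umul(\Sigma_0)=\Sigma_0$. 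The correct move is to drop the reconciliation attempt entirely: with the paper's intended reading of $\Reach[\cT^{\umul},\Sigma_0]$ as the least fixpoint of the best correct approximation of $\lambda X.\,\Sigma_0\cup\post(X)$, your first two steps already complete the proof.
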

\begin{proof}
It turns out that
\begin{align*}
\lfp(\lambda X.\umul(\Sigma_0 \cup \post(X) )) \subseteq P&\Lra
\quad\text{[by Lemma~\ref{char}~(a) for ucos]}\\
\exists \phi \in \umul(\wp(\Sigma)). \Sigma_0 \cup \post(\phi) \subseteq \phi\wedge \phi \subseteq P &\Lra 
\quad\text{[as $\umul(\wp(\Sigma)) = L$]}\\
\exists \phi \in L. \Sigma_0 \subseteq \phi\wedge \post(\phi)\subseteq \phi \wedge \phi \subseteq P& \tag*{\qed}
\end{align*}
\end{proof}

\subsection{Co-Inductive Invariants}
Following \cite{padon16}, in the following we make assumption (A$_2$), 
that is, by Lemma~\ref{lemma6}~(c), 
we assume that $L\subseteq \wp(\Sigma)$ 
is closed under arbitrary unions. This means that 
$\umul$ is an additive
uco on $\wp(\Sigma)_\subseteq$, \textit{i.e.}, in abstract interpretation terminology, $\umul$ is a \emph{disjunctive} abstract domain  whose abstract lub does
not lose precision (see, \textit{e.g.},\cite[Section~6.3]{mine17}). 
Furthermore, we also have that 
$L$ is the image of a co-additive (\textit{i.e.}, preserving arbitrary intersections)
lower closure $\lmul:\wp(\Sigma)\ra \wp(\Sigma)$
defined by $\lmul(X)\ud \cup\{\phi\in L\mid \phi\subseteq X\}$.
It turns out that the 
uco $\umul$ is adjoint to the  lco $\lmul$ (this is 
called abstract adjoinedness in \cite[Section~3.5]{cou00}),  namely,  
$\umul(X)\subseteq Y \Lra X \subseteq \lmul(Y)$. 
In fact, if $\umul(X)\subseteq Y$ then,
by applying $\lmul$, $X\subseteq \umul(X)=\lmul(\umul(X))\subseteq \lmul(Y)$; the converse is dual.

As observed in \cite[Theorem~4]{cou00}, the inductive invariant principle~\eqref{inv} 
can be dualized
when $f$ admits right-adjoint  $\ok{\widetilde{f}}:C\ra C$ 
(this happens iff $f$ is additive): in this case, 
\begin{equation}\label{dualization}
\lfp(\lambda x. c\vee f(x))\leq c' \Lra c\leq \gfp(\lambda x.c'\wedge \ok{\widetilde{f}}(x))
\end{equation}
holds and one obtains a \emph{co-inductive} invariant principle:
\begin{equation}\label{dinv}
c\leq \gfp(\lambda x.\ok{\widetilde{f}}(x)\wedge c') \Lra \exists j\in C.\: c\leq j \wedge 
j\leq \ok{\widetilde{f}}(j) \wedge j\leq c'
\end{equation} 
One such
$j\in C$ is therefore called a \emph{co-inductive invariant} of $f$
for $\tuple{c,c'}$. 

The co-inductive invariant 
proof method~\eqref{dinv} can be applied to safety verification of any transition system $\cT$ because $\post$ is additive and therefore it always admits right adjoint $\pret$ (cf.\ Section~\ref{sec:background}). Hence, we obtain  that
$\lfp(\lambda 
X. \Sigma_0 \cup \post(X))\subseteq P$ iff $\Sigma_0 \subseteq \gfp(\lambda 
X. \pret(X)\cap P)$ iff there exists a co-inductive invariant for $\pret$
for $\tuple{\Sigma_0,P}$.
By \eqref{inv} and \eqref{dinv}, it turns out that $I$ is an inductive invariant of $\post$ for $\tuple{\Sigma_0,P}$ 
iff $I$ is a co-inductive
invariant of $\pret$ for $\tuple{\Sigma_0,P}$. Also, while
$\lfp(\lambda X. \Sigma_0 \cup \post(X))$ is the least, \textit{i.e.}\ logically strongest, inductive invariant, we have that
$\gfp(\lambda X. \pret(X)\cap P)$ is the greatest, \textit{i.e.}\ logically weakest,  inductive invariant
\cite[Theorem~6]{cou00}.

We show how the co-inductive invariant principle \eqref{dinv}
applied 
to the best abstract transition system 
$\cT^{\umul}= (\Sigma,\umul \circ \post_\cT)$ provides exactly the 
synthesis algorithm by Padon et al.~\cite[Algorithm~1]{padon16}. 
In order to do this, we first observe the following alternative characterization 
of the reachable states of $\cT^\umul$.

\begin{lemma}\label{lemma9}
$\lfp(\lambda X. \Sigma_0 \cup \umul(\post(X)))=
\lfp(\lambda X.\Sigma_0 \cup \post(\umul(X)) \cup \umul(X))
$.
\end{lemma}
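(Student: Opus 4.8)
The plan is to first collapse the right-hand side using the work already done in Lemma~\ref{lemma6}~(e), and then to compare the two least fixpoints by a prefixpoint (Knaster--Tarski) argument in the same style. The proof of Lemma~\ref{lemma6}~(e) shows that $\lfp(\lambda X.\Sigma_0 \cup \post(\umul(X)) \cup \umul(X)) = \lfp(\lambda X.\umul(\Sigma_0 \cup \post(X)))$, so it suffices to identify this latter fixpoint with the left-hand side. Since assumption~(A$_2$) is in force, $\umul$ is additive by Lemma~\ref{lemma6}~(d), whence $\umul(\Sigma_0 \cup \post(X)) = \umul(\Sigma_0) \cup \umul(\post(X))$. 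Thus the goal reduces to
\[
\lfp(\lambda X.\, \Sigma_0 \cup \umul(\post(X))) = \lfp(\lambda X.\, \umul(\Sigma_0) \cup \umul(\post(X))).
\]

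For the inclusion $\subseteq$ I would simply note that $\Sigma_0 \subseteq \umul(\Sigma_0)$ by extensivity of $\umul$, so the left operator is pointwise below the right one; monotonicity of the least-fixpoint operator in its seed then gives the inclusion. Equivalently, every prefixpoint of the right operator is a prefixpoint of the left one, so the intersection of the left prefixpoints is contained in that of the right ones.

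The reverse inclusion is the crux, and I expect it to be the main obstacle. Writing $f(X)\ud \Sigma_0 \cup \umul(\post(X))$ and $g(X)\ud \Sigma_0 \cup \post(\umul(X)) \cup \umul(X)$, I would prove $\preFix(f) = \preFix(g)$ and conclude $\lfp(f)=\lfp(g)$ by Knaster--Tarski. The conjunct $\umul(X)\subseteq X$ hidden in $g(X)\subseteq X$ forces any $g$-prefixpoint to be closed, i.e.\ $X\in L$, and on such $X$ one has $\post(\umul(X))=\post(X)$ and $\umul(\post(X))\subseteq\umul(X)=X$, so $\preFix(g)=\{X\in L \mid \Sigma_0\subseteq X,\ \post(X)\subseteq X\}$ and the inclusion $\preFix(g)\subseteq\preFix(f)$ is immediate. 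The difficulty is the opposite inclusion: an $f$-prefixpoint $X$ only satisfies $\Sigma_0\subseteq X$ and $\umul(\post(X))\subseteq X$ and need not be closed, so that to absorb $\umul(\Sigma_0)$ into the left-hand fixpoint one must use disjunctivity of $\umul$ together with the fact that the initial states are expressible in the abstract domain, i.e.\ $\Sigma_0\in L$ (equivalently $\umul(\Sigma_0)=\Sigma_0$), as in the setting of Padon et al.~\cite{padon16}; under this, $f$ and $g$ share the same prefixpoints and the equality follows.
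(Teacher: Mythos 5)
Your proposal is correct and follows the paper's own route: collapse the right-hand side to $\lfp(\lambda X.\umul(\Sigma_0\cup\post(X)))$ via the proof of Lemma~\ref{lemma6}~(e), then split this operator using additivity of $\umul$. Your diagnosis of the crux is also exactly on target: the paper's proof concludes by asserting ``since $\umul$ is additive \emph{and} $\umul(\Sigma_0)=\Sigma_0$'', i.e., it invokes precisely the hypothesis $\Sigma_0\in L$ that you flagged as necessary but which is nowhere stated as an explicit assumption in this section (it is inherited from the setting of Padon et al., where initial states are expressible in $L$); and it is indeed necessary, since for $\tau=\varnothing$ and $\Sigma_0\notin L$ the left-hand side equals $\Sigma_0$ while the right-hand side equals $\umul(\Sigma_0)$. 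One small correction to your last step: even granting $\Sigma_0\in L$, the claim that your $f$ and $g$ ``share the same prefixpoints'' is false --- a prefixpoint of $f(X)=\Sigma_0\cup\umul(\post(X))$ need not be closed under $\umul$ (with $\tau=\varnothing$ and $\Sigma_0=\varnothing$, \emph{every} $X\in\wp(\Sigma)$ is a prefixpoint of $f$, whereas $\preFix(g)=L$), so in general $\preFix(g)\subsetneq\preFix(f)$. This does not harm your proof: once $\umul(\Sigma_0)=\Sigma_0$ is granted, additivity makes $f$ literally equal to $\lambda X.\umul(\Sigma_0\cup\post(X))$, whose least fixpoint Lemma~\ref{lemma6}~(e) already identifies with $\lfp(g)$; equality of the least fixpoints does not require equality of the prefixpoint sets.
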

\begin{proof} 
The proof of Lemma~\ref{lemma6}~(e) shows that
$\lfp(\lambda X.\Sigma_0 \cup \post(\umul(X)) \cup \umul(X))
=\lfp(\umul(\Sigma_0 \cup \post(X)))$.
Moreover, since $\umul$ is additive and $\umul(\Sigma_0)=\Sigma_0$, we also have that 
$\umul(\Sigma_0 \cup \post(X)) = \umul(\Sigma_0) \cup \umul(\post(X))= \Sigma_0 \cup \umul(\post(X))$, and this allows us to conclude. 
\qed
\end{proof}

Consequently, $\lfp(\lambda X. \Sigma_0 \cup \umul(\post(X)))\subseteq P \Lra 
\lfp(\lambda X.\Sigma_0 \cup \post(\umul(X)) \cup \umul(X))\subseteq P$ holds. Since $\lambda X.\post(\umul(X)) \cup \umul(X)$ is additive, we can apply the co-inductive invariant principle 
\eqref{dinv} by considering its adjoint function, which is as follows: 
\begin{align*}
\post (\umul(X)) \cup \umul(X) \subseteq Y
&\Lra 
\post (\umul(X)) \subseteq Y \wedge \umul(X) \subseteq Y\Lra \\
X \subseteq \lmul(\pret(Y)) \wedge X \subseteq \lmul(Y) & \Lra
X \subseteq \lmul(\pret(Y)) \cap \lmul(Y) 
\end{align*}
Thus, by Lemma~\ref{lemma9} and \eqref{dualization}, we obtain:
\begin{align*}
\lfp(\lambda X. \umul(\Sigma_0) \cup \umul(\post(X))) \subseteq P &
\Lra 
\umul(\Sigma_0) \subseteq \gfp(\lambda X. \lmul(\pret(X) \cap X \cap P))\\
&\Lra 
\Sigma_0 \subseteq \gfp(\lambda X. \lmul(\pret(X) \cap X \cap P))
\end{align*}
and, in turn, 
by the abstract inductive invariant principle (Lemma~\ref{char}~(a) for ucos)
applied to $\umul \circ (\lambda X.\Sigma_0 \cup \post(X)) =
\lambda X. \umul(\Sigma_0) \cup \umul(\post(X))$ we get:
\begin{equation*}
\exists \phi \in L. \Sigma_0 \subseteq \phi\wedge \post(\phi)\subseteq \phi \wedge \phi \subseteq P \Lra 
\Sigma_0 \subseteq \gfp(\lambda X. \lmul(\pret(X) \cap X \cap P)).
\end{equation*}
This leads us to use the algorithm introduced by 
Cousot~\cite[Algorithm~2]{cou00} 
which synthesizes 
an inductive invariant by using a co-inductive method
that applies Knaster-Tarski theorem  
to compute the
iterates of the greatest fixpoint of 
$\lambda X. \lmul(\pret(X)$ $\cap X \cap P)$ as long as the current iterate $I_1$ contains the initial states in $\Sigma_0$:

\medskip
{\small
\begin{algorithm}[H]
\caption{Co-inductive backward abstract inductive invariant synthesis.}\label{algo1}

$I_1:=\Sigma$\; 
\While(\tcp*[h]{Loop invariant: $I_1\in L$}){$\Sigma_0 \subseteq I_1$}{ 
\lIf{$(I_1 =\lmul(\pret(I_1)\cap I_1 \cap P))$}{\Return{$I_1$  is an inductive {invariant in $L$}}}
$I_1:=I_1\cap \lmul(\pret(I_1)\cap I_1 \cap P)$\;
}
\Return{\textit{no inductive invariant in $L$}}\;
\end{algorithm}
}

\medskip
\noindent
Since, $\pret$ is computable and, by Lemma~\ref{lemma6}, 
$\tuple{\lmul,\subseteq}=\tuple{L,\subseteq}$ is a wqo, 
we immediately obtain that Algorithm~\ref{algo1} is correct
and terminating. Furthermore, if Algorithm~\ref{algo1} 
outputs an inductive 
invariant $I_1$ proving the property $P$ then $I_1$ is the \emph{greatest}
inductive invariant proving $P$.  
It turns out that Algorithm~\ref{algo1} exactly coincides 
with the synthesis algorithm by Padon et al.~\cite{padon16}, 
which is replicated below as Algorithm~{\rm\ref{algo3}}.

\medskip
{\small
\begin{algorithm}[H]
\caption{Inductive invariant algorithm by \cite{padon16}.}\label{algo3}

$I_2 := \Sigma$\;

\While(\tcp*[h]{Loop invariant: $I_2\in L$}){$I_2$ \textit{is not an inductive invariant}}{
\lIf{$\Sigma_0\not\subseteq I_2$}{\Return{\textit{no inductive invariant in $L$}}}
\textbf{choose} $s\in \Sigma$ \textit{as a counterexample to inductiveness of} $I_2$\;
	$I_2:=I_2\cap \av_L(s)$\;
}
\Return{$I_2$ is an inductive invariant in $L$}\;
\end{algorithm}
}

\begin{lemma}\label{lemma10}
Let $I\in L$ and $\Sigma_0\subseteq I$.
\begin{enumerate}[{\rm (a)}]
\item 
there exists  a counterexample to inductiveness of $I$
iff 
 $I\not \subseteq \pret(I)\cap P$ iff $I \neq \lmul(\pret(I)\cap I \cap P)$.
\item 
If $s\in \Sigma$ is a counterexample to inductiveness of $I$ then
$\lmul(\pret(I)\cap I \cap P) \subseteq \av_L(s)$. 
\end{enumerate}
\end{lemma}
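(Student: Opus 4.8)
The plan is to reduce the whole statement to the adjunction $\tuple{\post,\pret}$ together with the elementary closure properties of $\lmul$ that hold under assumption (A$_2$) (Lemma~\ref{lemma6}~(c)), namely that $\lmul$ is a monotone, reductive map whose image is $L$ and that $\lmul(X)=\cup\{\phi\in L\mid \phi\subseteq X\}$ is the largest element of $L$ contained in $X$. First I would handle the first equivalence of~(a). Since $\Sigma_0\subseteq I$ is assumed, $I$ fails to be an inductive invariant for $\tuple{\Sigma_0,P}$ precisely when one of the two remaining conditions breaks, i.e.\ $\post(I)\not\subseteq I$ or $I\not\subseteq P$. Using $\post(I)\subseteq I \Lra I\subseteq \pret(I)$ (adjointness of $\tuple{\post,\pret}$), I would rewrite this as $I\not\subseteq\pret(I)$ or $I\not\subseteq P$, that is, $I\not\subseteq \pret(I)\cap P$. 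A counterexample to inductiveness of $I$ is by definition a witness state $s\in I\smallsetminus(\pret(I)\cap P)$ of this failure (either $s$ has a successor outside $I$, so $s\notin\pret(I)$, or $s$ is unsafe, so $s\notin P$), and such an $s$ exists iff $I\not\subseteq \pret(I)\cap P$.

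For the second equivalence of~(a) I would use reductivity of $\lmul$ and the fact that $I\in L$. Because $\pret(I)\cap I\cap P\subseteq I$ and $\lmul$ is reductive, the inclusion $\lmul(\pret(I)\cap I\cap P)\subseteq I$ always holds, so the equality $I=\lmul(\pret(I)\cap I\cap P)$ is equivalent to the reverse inclusion $I\subseteq \lmul(\pret(I)\cap I\cap P)$. Since $I\in L$ and $\lmul(Y)$ is the largest element of $L$ below $Y$, one has $I\subseteq\lmul(Y)\Lra I\subseteq Y$; instantiating $Y=\pret(I)\cap I\cap P$ and absorbing $I\cap I=I$ yields $I\subseteq \pret(I)\cap P$. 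Contraposing gives $I\neq \lmul(\pret(I)\cap I\cap P)\Lra I\not\subseteq \pret(I)\cap P$, which closes~(a).

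For part~(b) I would first record that $\av_L(s)=\lmul(\neg\{s\})$, immediate from the definitions of $\av_L$ and $\lmul$. If $s$ is a counterexample to inductiveness of $I$, then by the analysis of~(a) we have $s\notin \pret(I)\cap P$, hence $s\notin \pret(I)\cap I\cap P$, i.e.\ $\pret(I)\cap I\cap P\subseteq \neg\{s\}$. Applying the monotone map $\lmul$ to this inclusion then gives $\lmul(\pret(I)\cap I\cap P)\subseteq \lmul(\neg\{s\})=\av_L(s)$, exactly as required.

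The bulk of the argument is closure algebra, and no step is genuinely hard; the point that deserves the most care is the translation underlying~(a), namely recognizing that an \emph{operational} counterexample to inductiveness of $I$ — a state chosen by the algorithm to witness that $I$ is not closed under $\post$ or escapes $P$ — coincides \emph{exactly} with an element of $I\smallsetminus(\pret(I)\cap P)$, via the adjunction $\post(I)\subseteq I\Lra I\subseteq\pret(I)$. Once this correspondence is fixed, both equivalences of~(a) and the inclusion of~(b) follow solely from monotonicity and reductivity of $\lmul$ together with $I\in L$, so the only bookkeeping to keep straight is which of these two properties licenses each inclusion.
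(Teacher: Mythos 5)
Your proposal is correct and follows essentially the same route as the paper's proof: the $\post$/$\pret$ adjunction to characterize counterexamples as elements of $I\smallsetminus(\pret(I)\cap P)$, the closure algebra of $\lmul$ (reductivity, monotonicity, and $I=\lmul(I)$ since $I\in L$) for the second equivalence of (a), and monotonicity of $\lmul$ together with $\av_L(s)=\lmul(\neg\{s\})$ for (b). The only cosmetic difference is that you argue the first equivalence at the set level while the paper phrases it pointwise, and you pass directly from $\pret(I)\cap I\cap P\subseteq\neg\{s\}$ where the paper first drops the factor $I$; neither affects the substance.
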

\begin{proof}

\noindent (a) Under the assumption that $\Sigma_0\subseteq I$, 
$s$ is a counterexample to inductiveness of $I$ iff 
$(s\in I \wedge \post(s)\not\subseteq I) \vee s\in I\cap \neg P$.
Observe that $\post(s)\not\subseteq I$ 
iff $s\not\in \pret(I)$, so that $\exists s\in \Sigma. 
s\in I \wedge \post(s)\not\subseteq I$ iff $I\not\subseteq \pret(I)$. 
Hence, $\exists s\in \Sigma.(s\in I \wedge \post(s)\not\subseteq I) \vee s\in I\cap \neg P$ iff $I\not \subseteq \pret(I)\cap P$. 
Also:
\begin{align*}
I = \lmul(\pret(I)\cap I \cap P) &\Lra \quad\text{[as $I\in L$]}\\
I=\lmul(I) = \lmul(\pret(I)\cap I \cap P) &\Lra \quad\text{[as $\pret(I)\cap I \cap P\subseteq I$]}\\
I=\lmul(I) \subseteq \lmul(\pret(I)\cap I \cap P) &\Lra 
\quad\text{[as $\lmul$ is a lco]}\\
I=\lmul(I)\subseteq \pret(I)\cap I\cap P &\Lra\\
I\subseteq \pret(I)\cap P
\end{align*}

\noindent (b) The proof of point (a) shows  that
if $s\in \Sigma$ is a counterexample to inductiveness of $I$ 
then $s\in I$ and $s\not\in \pret(I) \cap P$. Then, $\pret(I) \cap P
\subseteq \neg\{s\}$, so that, by monotonicity of $\lmul$, 
$\lmul(\pret(I) \cap P)
\subseteq \lmul(\neg\{s\})$ and, in turn, $\lmul(\pret(I)\cap I \cap P)
\subseteq \lmul(\neg\{s\})= \av_L(s)$.
 \qed
\end{proof}

\begin{theorem}\label{theo-algo}
Algorithm~{\rm\ref{algo1}} $=$ Algorithm~{\rm\ref{algo3}}. 
\end{theorem}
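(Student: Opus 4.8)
The plan is to prove that the two algorithms compute the same function of the input $\tuple{\cT,\Sigma_0,P}$: each returns the greatest inductive invariant in $L$ when one exists, and reports ``no inductive invariant'' otherwise. Throughout I would abbreviate $G \ud \lambda X.\, \lmul(\pret(X)\cap X \cap P)$, the monotone map whose greatest fixpoint drives this section.

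First I would analyse Algorithm~\ref{algo1}. Since $\lmul$ is reductive and $\pret(I_1)\cap I_1 \cap P \subseteq I_1$, the update $I_1 := I_1 \cap \lmul(\pret(I_1)\cap I_1\cap P)$ simplifies to $I_1 := G(I_1)$, so the algorithm merely iterates $G$ downward from $\Sigma$. As $\tuple{L,\subseteq}$ is a wqo (Lemma~\ref{lemma6}), hence DCC, this descending chain stabilizes at $\gfp(G)$ after finitely many steps, giving termination. I would then note that a set $X\in L$ satisfies $X\subseteq G(X)$ iff $X=G(X)$ and $X\subseteq \pret(X)\cap P$ (because $G(X)\subseteq \pret(X)\cap X\cap P\subseteq X$, together with the equivalences from the proof of Lemma~\ref{lemma10}~(a)); thus the fixpoints of $G$ are exactly the elements of $L$ that are inductive for $\post$ and $P$, and $\gfp(G)$ is the greatest such element. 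Tracing the control flow then yields: Algorithm~\ref{algo1} returns $\gfp(G)$ exactly when $\Sigma_0\subseteq \gfp(G)$ (the while-guard stays true down to the fixpoint, where the inner test fires), and returns ``no inductive invariant'' exactly when $\Sigma_0\not\subseteq \gfp(G)$ (the guard fails at latest at the stabilization step). Since any inductive invariant $\phi$ must satisfy $\Sigma_0\subseteq \phi\subseteq \gfp(G)$, this matches the existence criterion derived before the algorithms.

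The heart of the argument, and the step I expect to be the main obstacle, is analysing Algorithm~\ref{algo3} through the loop invariant $\gfp(G)\subseteq I_2$, proved by induction. The difficulty is that the update $I_2:=I_2\cap \av_L(s)$ depends on a nondeterministically chosen counterexample $s$ and need not equal $G(I_2)$, so the two iterate sequences genuinely differ. The inductive step reduces to $\gfp(G)\subseteq \av_L(s)=\lmul(\neg\{s\})$, which I would obtain by first showing $s\notin \gfp(G)$: were $s\in \gfp(G)$, inductiveness of $\gfp(G)$ would give $\post(s)\subseteq \gfp(G)\subseteq I_2$ and $s\in \gfp(G)\subseteq P$, so $s$ could not be a counterexample to inductiveness of $I_2$. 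Hence $\gfp(G)\subseteq \neg\{s\}$, and since $\gfp(G)\in L$ with $\lmul$ the identity on $L$, $\gfp(G)=\lmul(\gfp(G))\subseteq \lmul(\neg\{s\})=\av_L(s)$, closing the induction (this is where the computation underlying Lemma~\ref{lemma10}~(b) is really used).

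With the invariant $\gfp(G)\subseteq I_2$ secured, I would finish by a case split. If $\Sigma_0\subseteq \gfp(G)$, then $\Sigma_0\subseteq I_2$ at every iteration, so Algorithm~\ref{algo3} never takes the ``no invariant'' exit; each step strictly shrinks $I_2$ (as $s\in I_2\smallsetminus \av_L(s)$), so by DCC it halts with an inductive invariant $I_2\supseteq \gfp(G)$, and maximality of $\gfp(G)$ forces $I_2=\gfp(G)$, the output of Algorithm~\ref{algo1}. If $\Sigma_0\not\subseteq \gfp(G)$, then no iterate $I_2\supseteq \gfp(G)$ can be an inductive invariant (inductiveness would force $I_2=\gfp(G)\not\ni\Sigma_0$), so the loop runs until either the test $\Sigma_0\not\subseteq I_2$ succeeds or $I_2$ descends to $\gfp(G)$; in both cases it returns ``no inductive invariant'', again agreeing with Algorithm~\ref{algo1}. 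Since the two algorithms agree on every input, they are equal.
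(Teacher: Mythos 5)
Your proof is correct, but it is organized differently from the paper's. The paper proceeds by \emph{simulation between iterate sequences}: it introduces an intermediate Algorithm (a modification of Algorithm~\ref{algo1} whose update is counterexample-driven, $I_4:=I_4\cap\av_L(s)$ for a chosen $s\in I_4\smallsetminus(\pret(I_4)\cap P)$), identifies its iterates with those of Algorithm~\ref{algo3}, and uses Lemma~\ref{lemma10}~(b) to get the iterate-wise domination $I_1^n\subseteq I_4^n=I_2^n$; since all three outputs are fixpoints of $G=\lambda X.\lmul(\pret(X)\cap X\cap P)$ and $I_1=\gfp(G)$ is the greatest one, the outputs coincide. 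You instead verify both algorithms against a common input/output specification --- ``return $\gfp(G)$ iff $\Sigma_0\subseteq\gfp(G)$, else report failure'' --- handling Algorithm~\ref{algo3} by the loop invariant $\gfp(G)\subseteq I_2$, whose inductive step rests on showing $s\notin\gfp(G)$ for any counterexample $s$ (a variant of Lemma~\ref{lemma10}~(b) applied at the fixpoint rather than at the current iterate, via $G$-closure of $\gfp(G)$ and $\gfp(G)=\lmul(\gfp(G))\subseteq\lmul(\neg\{s\})=\av_L(s)$). Each approach has its merits: the paper's simulation yields the stronger quantitative fact that Algorithm~\ref{algo1}'s iterates are contained in Algorithm~\ref{algo3}'s at every step, and explains Algorithm~\ref{algo3} as a counterexample-driven implementation of the $\gfp$ iteration; your argument is cleaner about the nondeterminism of the counterexample choice (your invariant holds for \emph{every} choice, whereas the paper's iterate-wise identity $I_4^n=I_2^n$ tacitly assumes matching choices) and, unlike the paper's proof, makes the agreement on the ``no inductive invariant in $L$'' outcomes fully explicit rather than implicit.
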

\begin{proof} 
Consider the following variation of Algorithm~\ref{algo1}:

\medskip
\noindent
\begin{algorithm}[H]
\caption{A  modification of Algorithm~\ref{algo1}.}\label{algo2}

$I_4 := \Sigma$\;

\While(\tcp*[h]{Invariant: $I_4\in L$}){$\Sigma_0 \subseteq I_4$}{ 
\lIf{$(I_4\smallsetminus (\pret(I_4)\cap P)= \varnothing)$}{\Return{$I_4$  is an inductive invariant in $L$}}
\textbf{choose} $s\in I_4\smallsetminus (\pret(I_4)\cap P)$\;
$I_4:=I_4\cap \lmul(\{s\})$\;
}
\Return{\textit{no inductive invariant in $L$}}\;
\end{algorithm}

\medskip
\noindent
Algorithm~\ref{algo1} returns $I_1\in L$ iff 
$I_1=\gfp(\lambda X. \lmul(\pret(X) \cap X \cap P))$.
In this case, by Lemma~\ref{lemma10}~(a), since $\Sigma_0\subseteq I_1$ holds, $I_1$
is an (actually, the greatest) 
inductive invariant in $L$.  
 Otherwise, 
Algorithm~\ref{algo1} returns ``no inductive invariant in $L$''. 
By Lemma~\ref{lemma10}~(a), Algorithm~\ref{algo2} 
returns $I_4\in L$ iff $I_4\subseteq \pret(I_4)\cap P$ iff
$I_4 = \lmul(\pret(I_4)\cap I_4 \cap P)$, otherwise it  
returns ``no inductive invariant in $L$''. 
Algorithm~\ref{algo3} 
returns $I_2\in L$ iff $I_2$ is an inductive invariant, otherwise
it  
returns ``no inductive invariant in $L$''. 
Let $I_k^n$ be the current candidate invariant of Algorithm~$k\in \{1,2,4\}$ at 
its $n$-th iteration and $I_k$ be the output invariant of  Algorithm~$k$.
By Lemma~\ref{lemma10}~(b), $I_1^n\subseteq I_4^n=I_2^n$, so that
$I_1\subseteq I_4=I_2$. Since $I_k$ are
fixpoints of $\lambda X.\lmul(\pret(X)\cap X \cap P)$ and $I_1$ is the greatest
fixpoint, it turns out that $I_1= I_4=I_2$.
\qed
\end{proof}

This shows that Algorithm~{\rm\ref{algo3}} in \cite{padon16} 
for a disjunctive GC-based abstract domain $A$
amounts to a backward static analysis (\textit{i.e.}, propagating $\pret$) 
using the best correct approximations in $A$ of the
transfer functions, as long as the ordering  
of $A$ 
guarantees its termination, \textit{e.g.}, because $A$ is well-founded. 

\begin{examplebf}
It turns out that $\Const$, as defined in Section~\ref{const-sec},
is a disjunctive abstract domain with finite height and its bca's 
of affine assignments and linear Boolean guards are computable. Thus,
$\Const$ satisfies the hypotheses guaranteeing the correctness and 
termination of Algorithm~\ref{algo1} (and \ref{algo3}). 
Let us apply Algorithm~\ref{algo1} to the program $\cP$ of Example~\ref{ex-one} 
in Fig.~\ref{fig2}, still with
$\Sigma_0=\{q_1\}\times \bZ^2$ such that 
$\alpha_{\Const}(\Sigma_0) = \ok{\big\langle\tuple{q_1,(\top,\top)},}$ 
$\ok{\tuple{q_2,(\bot,\bot)},\tuple{q_3,(\bot,\bot)},\tuple{q_4,(\bot,\bot)}\big\rangle}$
and for the valid property
$P^\sharp = \ok{\big\langle \tuple{q_1,(\top,\top)},}$
$\ok{\tuple{q_2,(\top,2)}, \! \tuple{q_3,(\top,\top)},}$
$\ok{\tuple{q_4,(\top,\top)}\big\rangle} \in \ok{\Const_2^{|Q|}}$.
It is easy to check that
Algorithm~\ref{algo1} computes the following sequence $I^{k+1}=\alpha_{\Const}(\pret_{\cP}(\gamma_{\Const}(I^k))) \sqcap_{\Const} I^k \sqcap_{\Const} P^\sharp$ in $\Const_2^{|Q|}$:
\begin{align*}
&I^0=\dot{\alpha}_{\Const}(\Sigma)=\big\langle\tuple{q_1,(\top,\top)}, 
\tuple{q_2,(\top,\top)},\tuple{q_3,(\top,\top)},\tuple{q_4,(\top,\top)}\big\rangle\\
&I^1=\big\langle\tuple{q_1,(\top,\top)}, 
\tuple{q_2,(\top,2)},\tuple{q_3,(\top,\top)},\tuple{q_4,(\top,\top)}\big\rangle\\
&I^2=\big\langle\tuple{q_1,(\top,\top)}, 
\tuple{q_2,(\top,2)},\tuple{q_3,(\top,1)},\tuple{q_4,(\top,\top)}\big\rangle=\\
&\quad =\dot{\alpha}_{\Const}(\pret_{\cP}(\dot{\gamma}_{\Const}(I^3))) \:\dot{\geq}_{\Const}\:
\dot{\alpha}_{\Const}(\Sigma_0)
\end{align*}
The computation of $I^2$ derives from
$\pret(\{q_2\}\times \bZ \times \{2\}\cup \{q_1,q_3,q_4\}\times \bZ^2)
=\{q_1,q_2,q_4\}\times \bZ^2 \cup \{q_3\}\times \bZ\times \{1\}$, so
$\gamma_{\Const} (I^1) \cap \gamma_{\Const}(\alpha_{\Const}(\{q_1,q_2,q_4\}\times \bZ^2 \cup \{q_3\}\times \bZ\times \{1\}))= \{q_1,q_4\}\times \bZ^2 \cup \{q_2\}\times \bZ\times \{2\} \cup \{q_3\}\times \bZ\times \{1\}$. 
Thus, $I^2$ is returned by  Alg.~\ref{algo1} and is the greatest inductive invariant in $\Const$ which proves $P^\sharp$.
In particular, the abstract inductive invariant $I^2$ is strictly weaker than
the inductive invariant $\ok{\big\langle\tuple{q_1,(\top,\top)}, 
\tuple{q_2,(\top,2)},\tuple{q_3,(\top,1)},\tuple{q_4,(\top,2)}\big\rangle}$ computed in Example~\ref{ex-one} for proving
the same property $P^\sharp$ by the algorithm $\AInv$ of Corollary~\ref{coro-algo}.
\qed
\end{examplebf}

\subsection{Backward and Forward Algorithms}
We have that Algorithm~\ref{algo1} is backward because it applies 
$\pret$, for termination it requires 
that the abstract domain $\tuple{\umul,\subseteq}$ is DCC 
and turns out to be
the dual of the forward algorithm $\AInv$ provided by 
Corollary~\ref{coro-algo} for $\post$ and 
requiring that
$\tuple{\umul,\subseteq}$ is ACC. 
This latter assumption would be satisfied
by dualizing the wqo $\sqsubseteq_L$, \textit{i.e.}, 
by requiring that $\tuple{\Sigma,\sqsupseteq_L}$ is a wqo. 

A different gfp-based \emph{forward} algorithm based can be  
designed by observing (as in \cite[Section~3]{CC99}) 
that $\lfp(\lambda X. \Sigma_0 \cup \post(X))\subseteq P$ iff
$\lfp(\lambda X.\neg P \cup \pre(X))\subseteq \neg \Sigma_0$. 
Here, the dualization provided by the equivalence \eqref{dualization} yields:
\begin{equation*}
\lfp(\lambda X. \neg P \cup \umul(\pre(X))) \subseteq \neg\Sigma_0 \Lra 
\neg P \subseteq \gfp(\lambda X. \lmul(\postt(X) \cap X \cap \neg \Sigma_0)).
\end{equation*}
This induces 
the following \emph{co-inductive forward algorithm} which relies on the state
transformer $\postt$ and is terminating when 
$\tuple{\umul,\subseteq}$ is assumed to be DCC:

\medskip
\begin{algorithm}[H]
\caption{Co-inductive forward abstract inductive invariant synthesis.}\label{algo-4}

$I:=\Sigma$\; 
\While(\tcp*[h]{Loop invariant: $I\in L$}){$\neg P \subseteq I$}{
\lIf{$(I =\lmul(\postt(I)\cap I \cap \neg\Sigma_0))$}{\Return{$I$  is an inductive {invariant in $L$}}}
$I:=I\cap \lmul(\postt(I)\cap I \cap \neg \Sigma_0)$\;
}
\Return{\textit{no inductive invariant in $L$}}\;
\end{algorithm}

\medskip
Furthermore, by dualizing the technique and the algorithm described in 
\cite[Section~4.3]{CC99} for $\post$ and $\pre$, one could also design
a more efficient combined forward/backward synthesis algorithm
which simultaneously make backward, by $\pret$, and forward, by $\postt$, 
steps.

\section{Future Work}
As hinted by Monniaux~\cite{mon19}, 
results of undecidability to the question~\eqref{ainv} for some abstract domain $A$ display a foundational trait since they ``vindicate'' (often years of intense) research on precise and efficient algorithms for \emph{approximate} 
static program analysis on $A$.  To the best of our knowledge, few undecidability results are available:
an undecidability result by Monniaux~\cite[Theorem~1]{mon19} for the abstraction of convex polyhedra~\cite{CH78} and by  Fijalkow et al.~\cite[Theorem~1]{worrell19} for semilinear sets, \textit{i.e.}\ 
finite unions of convex polyhedra. 
However, convex polyhedra and semilinear sets cannot be defined by a Galois connection and therefore do not satisfy  our Assumption~\ref{assu}. 
As an interesting and stimulating future work we plan to investigate whether 
the abstract inductive invariant principle could be exploited to provide a reduction of the undecidability of 
the question~\eqref{ainv} for abstract domains which satisfy 
Assumption~\ref{assu} and, in view of the characterization 
of fixpoint completeness given in Section~\ref{sec-char},
for transfer functions which are not fixpoint complete.  

We also plan to study whether complete abstractions 
can play a role in the decidability 
result by Hrushovski et al.~\cite{worrell18}
on the computation of the strongest polynomial
invariant of an affine program. This hard result in \cite{worrell18} relies on the Zariski closure, %
which is continuous for affine functions. 
The observation here is that the Zariski closure 
is pointwise complete for the transfer functions of affine programs, 
therefore fixpoint completeness 
for affine programs holds, and one could investigate whether the algorithm 
given 
in \cite{worrell18} may be viewed as a least fixpoint computation of a best correct approximation on the Zariski abstraction.


\end{document}